\documentclass[sigconf]{aamas} 

\pdfoutput=1

\usepackage{balance} %

\setcopyright{ifaamas}
\acmConference[AAMAS '24]{Proc.\@ of the 23rd International Conference
on Autonomous Agents and Multiagent Systems (AAMAS 2024)}{May 6 -- 10, 2024}
{Auckland, New Zealand}{N.~Alechina, V.~Dignum, M.~Dastani, J.S.~Sichman (eds.)}
\copyrightyear{2024}
\acmYear{2024}
\acmDOI{}
\acmPrice{}
\acmISBN{}

\acmSubmissionID{1057}

\title[Monitoring Second-Order Hyperproperties]{Monitoring Second-Order Hyperproperties}

\author{Raven Beutner}
\affiliation{
  \institution{CISPA Helmholtz Center for \\Information Security}
  \country{Germany}}
\orcid{0000-0001-6234-5651}

\author{Bernd Finkbeiner}
\affiliation{
  \institution{CISPA Helmholtz Center for \\Information Security}
  \country{Germany}}
\orcid{0000-0002-4280-8441}

\author{Hadar Frenkel}
\affiliation{
  \institution{CISPA Helmholtz Center for \\Information Security}
  \country{Germany}}
\orcid{0000-0002-3566-0338}

\author{Niklas Metzger}
\affiliation{
  \institution{CISPA Helmholtz Center for \\Information Security}
  \country{Germany}}
\orcid{0000-0003-3184-6335}

\begin{abstract}
Hyperproperties express the relationship between multiple executions of a system. 
This is needed in many AI-related fields, such as knowledge representation and planning, to capture system properties related to knowledge, information flow, and privacy. In this paper, we study the monitoring of complex hyperproperties at runtime. Previous work in this area has either focused on the simpler problem of monitoring trace properties (which are sets of traces, while hyperproperties are sets of sets of traces) or on monitoring first-order hyperproperties, which are expressible in temporal logics with first-order quantification over traces, such as HyperLTL. We present the first monitoring algorithm for the much more expressive class of second-order hyperproperties. Second-order hyperproperties include  system properties like common knowledge, which cannot be expressed in first-order logics like HyperLTL.

We introduce \sohyperltlstar{}, a temporal logic over finite traces that allows for second-order quantification over sets of traces.
We study the monitoring problem in two fundamental execution models: (1) the parallel model, where a fixed number of traces is monitored in parallel, and (2) the sequential model, where an unbounded number of traces is observed sequentially, one trace after the other.
For the parallel model, we show that the monitoring of the second-order hyperproperties of \sohyperltlstar{} can be reduced to monitoring first-order hyperproperties. 
For the sequential model, we present a monitoring algorithm that handles second-order quantification efficiently, exploiting optimizations based on the monotonicity of subformulas, graph-based storing of executions, and fixpoint hashing.
We present experimental results from a range of benchmarks, including examples from common knowledge and planning. 
\end{abstract}

\keywords{Runtime Verification,
Common Knowledge,
Multi-agent Systems
}

\newcommand{\BibTeX}{\rm B\kern-.05em{\sc i\kern-.025em b}\kern-.08em\TeX}

\usepackage{ltl}
\usepackage{mathtools}

\usepackage{array}

\usepackage{algorithm}
\usepackage{algorithmic}

\usepackage{multirow}

\usepackage{enumitem}
\usepackage{amssymb}
\usepackage{amsmath}
\usepackage{graphicx}
\usepackage{amsfonts} %
\usepackage{stmaryrd} %
\usepackage{xcolor}
\usepackage{booktabs}
\usepackage{amsthm}
\usepackage{cleveref}
\usepackage{mathpartir}
\usepackage{tikz}

\newtheorem{definition}{Definition}
\newtheorem{theorem}{Theorem}
\newtheorem{proposition}{Proposition}
\newtheorem{example}{Example}

\usepackage{tablefootnote}
\usepackage{makecell}
\usepackage{todonotes}
\usepackage{comment}
\usepackage{subcaption}
\usepackage{listings}
\usepackage{relsize}

\usepackage{caption}

\usepackage{multirow}

\allowdisplaybreaks

\newcommand{\tracevars}{\mathcal{V}}
\newcommand{\sovars}{\mathcal{W}}
\newcommand{\systemvar}{\mathit{sys}}

\renewcommand{\models}{\vDash}

\newcommand{\ap}{\text{AP}}

\newcommand{\U}{\LTLuntil}
\newcommand{\X}{\LTLnext}

\newcommand{\quant}{\mathbb{Q}}

\newcommand{\nat}{\mathbb{N}}

\newcommand{\traceSet}{\mathbb{T}}

\newcommand{\ldot}{\mathpunct{.}}

\newcommand{\sohyperltl}[0]{\text{\normalfont{Hyper\textsuperscript{2}LTL}}}%
\newcommand{\sohyperltlstar}[0]{$\text{Hyper}^2\text{LTL}_f$}
\newcommand{\ltlf}[0]{$\text{LTL}_f$}

\newcommand{\titlesohyperltlstar}[0]{$\text{Hyper}^2\text{LTL}_f$}
\newcommand{\ltl}[0]{\text{\normalfont{LTL}}}

\newcommand{\hyperltl}[0]{\text{\normalfont{HyperLTL}}}
\newcommand{\hyperctls}[0]{\text{\normalfont{HyperCTL$^*$}}}

\newcommand{\mon}{\oplus}
\newcommand{\antimon}{\ominus}
\newcommand{\checkfun}{\lstinline[style=default,language=custom-lang]|check|}
\newcommand{\moso}{\texttt{MoSo}}

\usetikzlibrary{arrows,fit,shapes.geometric,backgrounds}
\raggedbottom

\definecolor{dkcyan}{rgb}{0.1, 0.3, 0.3}
\colorlet{comment-color}{black!50}

\lstdefinelanguage{custom-lang}{
	keywords={let, in, match, with, when, if, then, else, elif, for, to, do, rec, return, new, not, and, while,def},
	keywordstyle=[1]\bfseries,
	morekeywords=[2]{computeFix,check,Monitor,computeMonotonicity,getNextTrace,order},
	keywordstyle=[2]\color{dkcyan},
        morekeywords=[3]{SAT,UNSAT,true,false},
	keywordstyle=[3]\color{black!70},
	comment=[l][\color{comment-color}]{//},
	literate=%
	{=}{{{=}}}1
	{<-}{{{$\leftarrow$}}}1
	{|}{{{$\mid$}}}1
	{:}{{{:}}}1
	{:=}{{{:=}}}1
	{@}{ }1
}

\lstdefinestyle{default}{
	escapeinside={(*}{*)},
	basicstyle=\ttfamily\fontsize{7.5}{9}\selectfont,
	columns=fullflexible,
	commentstyle=\sffamily\color{black!50!white},
	framexleftmargin=1em,
	framexrightmargin=1ex,
	keepspaces=true,
	keywordstyle=\color{dkblue},
	mathescape,
	numbers=left,
	numberblanklines=false,
	numbersep=0.5em,
	numberstyle=\relscale{0.75}\color{gray}\ttfamily,
	showstringspaces=true,
	stepnumber=1,
	xleftmargin=1.2em,
}

\lstdefinestyle{defaultH}{
	backgroundcolor=\color{green},
	escapeinside={(*}{*)},
	basicstyle=\ttfamily\fontsize{7.5}{9}\selectfont,
	columns=fullflexible,
	commentstyle=\sffamily\color{black!50!white},
	framexleftmargin=1em,
	framexrightmargin=1ex,
	keepspaces=true,
	keywordstyle=\color{dkblue},
	mathescape,
	numbers=left,
	numberblanklines=false,
	numbersep=0.5em,
	numberstyle=\relscale{0.75}\color{gray}\ttfamily,
	showstringspaces=true,
	stepnumber=1,
	xleftmargin=1.2em,
}

\usepackage{tcolorbox}

\lstnewenvironment{mycode}[1][]
{\small
	\lstset{
		style=default, 
		language=custom-lang,
		#1
	}
}
{}

\newcommand\xqed[1]{%
	\leavevmode\unskip\penalty9999 \hbox{}\nobreak\hfill
	\quad\hbox{#1}}
\newcommand\demo{\xqed{$\triangle$}}

\makeatletter
\gdef\@copyrightpermission{
	\begin{minipage}{0.3\columnwidth}
		\href{https://creativecommons.org/licenses/by/4.0/}{\includegraphics[width=0.90\textwidth]{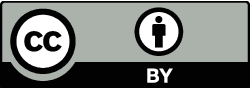}}
	\end{minipage}\hfill
	\begin{minipage}{0.7\columnwidth}
		\href{https://creativecommons.org/licenses/by/4.0/}{This work is licensed under a Creative Commons Attribution International 4.0 License.}
	\end{minipage}
	\vspace{5pt}
}

\captionsetup
[algorithm]%
{%
	labelfont = bf,
	textfont  = bf,
	labelsep  = colon
}

\begin{document}

\pagestyle{fancy}
\fancyhead{}

\maketitle

\section{Introduction}\label{sec:intro}
Monitoring is a practical and scalable method for ensuring that the behavior of a complex system satisfies its formal specification. Unlike traditional verification techniques, such as model checking and theorem proving, monitoring
does not analyze a system model, but instead works directly with the traces of the running system.  
Monitoring has been well studied for various types of trace properties, such as invariants or other properties expressible in standard temporal logics like linear-time temporal logic (LTL)~\cite{DBLP:journals/entcs/FinkbeinerS01,DBLP:journals/sttt/HavelundR04,10.1145/2000799.2000800}. 
By contrast, the study of monitoring algorithms for hyperproperties is still in an early stage. 

Hyperproperties express relationships between multiple traces. Many properties related to knowledge, information flow, and privacy are hyperproperties and can, therefore, not be analyzed with methods for trace properties. The conceptual challenge is that while trace properties are sets of traces, hyperproperties are \emph{sets of} sets of traces. So while the monitor of a trace property simply decides whether the observed trace is an element of the given set, the monitor of a hyperproperty must consider the entire set of traces produced by the system under observation (including any traces that will only be produced in the future) and decide whether this full set of traces is an element of the specified set of sets of traces.

Previous work on monitoring hyperproperties~\cite{conf/csfw/AgrawalB16,conf/tacas/BrettSB17,conf/csfw/BonakdarpourF18,DBLP:journals/fmsd/FinkbeinerHST19} has focused on the special case of first-order hyperproperties, in particular on properties that can be expressed in the temporal logic HyperLTL~\cite{ClarksonFKMRS14}. HyperLTL extends LTL with first-order quantification over traces. First-order hyperproperties include many information-flow policies, such as noninterference and some simple notions of knowledge. Consider, for example, the property that an agent~$i$ in a multi-agent system (MAS) knows that an LTL formula $\varphi$ is true whenever formula $\varphi$ is actually true. \emph{Knowing} that $\varphi$ holds on a trace $\pi$ means that $\varphi$ must hold on all traces $\pi'$ that are indistinguishable from $\pi$ for agent~$i$. The property is, therefore, expressed by the HyperLTL formula
\[\forall \pi. \forall \pi'.\ \varphi[\pi] \rightarrow (\pi \sim_i \pi' \rightarrow \varphi[\pi']),\]
where $\pi \sim_i \pi'$ denotes that agent $i$ cannot distinguish between $\pi$ and $\pi'$, and $\varphi[\pi]$ encodes that $\varphi$ holds on trace $\pi$.
A HyperLTL monitor for this formula detects a violation as soon as a trace $\pi'$ is observed that is indistinguishable from some previously seen trace $\pi$ and yet $\pi'$ does not satisfy $\varphi$.

Many hyperproperties of interest, however, cannot be
expressed with first-order quantifiers alone. A prominent example is common knowledge \cite{DBLP:books/mit/FHMV1995,HalpernM90,Meyden98}.
\emph{Common knowledge} (CK) requires that every agent~$i$ not only knows that $\varphi$ is true, but, additionally, knows that every other agent knows that agent~$i$ knows that $\varphi$ is true. The fact that every agent knows this, must, in turn, be known by every other agent, and so on. 
Formally, common knowledge refers to a set of traces that is closed under the indistinguishability relations of all agents, and requires that $\varphi$ holds on all traces in this set.
\sohyperltl{} \cite{BeutnerFFM23} extends \hyperltl{} with second-order quantification over sets of traces and can thus express properties like CK. 
Since model-checking \sohyperltl{} formulas is undecidable, the verification method for \sohyperltl{} is based on approximation techniques. The question arises whether or not monitoring second-order hyperproperties suffers from a similar limitation.  
In this paper, we show that while the precise answer depends on the underlying execution model, for a large class of second-order hyperproperties the monitoring problem can, in fact, be solved effectively.

As a specification logic for monitoring second-order hyperproperties in MASs, we
introduce the temporal logic \sohyperltlstar{}. The main difference to \sohyperltl{} is that 
monitoring deals with \emph{finite} rather than infinite traces; \sohyperltlstar{} therefore has a finite-trace semantics. 
\sohyperltlstar{} furthermore includes past-time temporal operators and allows for arbitrary nestings of temporal operators and first/second-order quantifiers. Common knowledge in a MAS with agents  $1, \ldots, n$ can be expressed as the  \sohyperltlstar{} formula:
\begin{align*}
    &\forall \pi\ldot \varphi[\pi] \to \exists X \ldot \pi \in X \, \land \\
    &\quad\Big( \forall \pi_1 \in X.\, \forall \pi_2.\ \big(\bigvee_{i=1}^n \pi_1 \sim_i \pi_2 \big) \rightarrow \pi_2 \in X \Big)\, \land\, \forall \pi' \in X\ldot \varphi[{\pi'}].
\end{align*}
The second-order quantifier $\exists X$ specifies the existence of a set $X$ of traces so that $\pi$ is in $X$, and all traces that are indistinguishable from some trace in $X$ for some agent are also in $X$.
Given this set, a monitor detects a violation by finding a trace in the set that does not satisfy $\varphi$.

Two fundamental execution models in the monitoring of hyperproperties are the parallel and sequential models (cf.~\cite{DBLP:journals/fmsd/FinkbeinerHST19}): in the \emph{parallel model}, a fixed number of traces is monitored in parallel; in the \emph{sequential model}, an unbounded number of traces is observed sequentially, one trace after the other.

For the parallel model, we show that the monitoring of the second-order hyperproperties expressed in \sohyperltlstar{} can be reduced to monitoring first-order hyperproperties. 
For the sequential model, we show that the monitoring problem is undecidable in general, but becomes feasible for the practically relevant class of monotone second-order hyperproperties.
A second-order hyperproperty is $\mon$-monotone if its satisfaction on some set of traces implies that it is also satisfied on any superset of this set. Conversely, a second-order hyperproperty is 
$\antimon$-monotone if a violation on some set implies its violation on any superset. Monotonicity thus allows the monitor to provide \emph{definitive} results that hold irrespectively of the traces that may still arrive in the future. 

We introduce an inference system for monotonicity and present a monitoring algorithm that iteratively checks the given set of traces until it can produce a decisive answer. 
The algorithm has been implemented in a tool called \moso{}.
We report on encouraging evaluation results for \moso{} on several benchmarks, including examples from common knowledge and planning.

\paragraph{Related Work}
\citet{DBLP:conf/ijcai/GiacomoV13} first introduced a logic for interpreting \ltl{} on finite traces, called \ltlf{}, specifically designed for AI systems.
Since then, \ltlf{} and its variants have been used, e.g., for model checking \cite{DBLP:journals/corr/abs-2305-08319,RajasekaranV22}, satisfiability analysis~\cite{DBLP:conf/aaai/FiondaG16}, synthesis~\cite{DBLP:conf/ijcai/GiacomoV15,GutierrezPW21}, and  planning~\cite{DBLP:conf/aips/CamachoBMM18}.
Knowledge in combination with \ltl{} dates back to \citet{DBLP:books/mit/FHMV1995} and has has found many applications in MASs~\cite{DBLP:conf/ijcai/KongL17, DBLP:conf/atal/KongL18, DBLP:conf/aaai/FelliMPW23}.
Logics for hyperproperties have mostly been obtained by extending temporal logics with explicit quantification over traces or paths, such as HyperLTL~\cite{ClarksonFKMRS14}, HyperQPTL \cite{DBLP:phd/dnb/Rabe16}, HyperPDL~\cite{GutsfeldMO20}, HyperATL$^*$~\cite{BeutnerF21,BeutnerF24}, and HyperLDL$_f$~\cite{GiacomoFMP21}.
\sohyperltl{} adds second-order quantification over \emph{sets} of traces \cite{BeutnerFFM23}.
Tools for model-checking knowledge in multi-agent systems include \texttt{MCK}~\cite{DBLP:conf/cav/GammieM04} and \texttt{MCMAS}~\cite{DBLP:journals/sttt/LomuscioQR17}, the latter was also extended to finite trace semantics \cite{DBLP:conf/atal/KongL18}.
While these approaches and tools implement solutions explicitly for knowledge, \sohyperltlstar{} generalizes to more general second-order hyperproperties.

\section{\titlesohyperltlstar}\label{sec:sohyperltlstar}

We consider a finite set of atomic propositions $AP$ and define $\Sigma := 2^{AP}$. 
We define \sohyperltlstar\ as a finite-trace extension of \sohyperltl\ \cite{BeutnerFFM23}.
Let $\tracevars{} = \{\pi, \pi', \pi_1, \ldots \}$ be a set of (first-order) trace variables and $\sovars{} = \{X, Y, \ldots\}$ a set of (second-order) set variables.
We assume a special second-order variable $\systemvar \in \sovars{}$ that we use to refer to the set of all system traces.
\sohyperltlstar{} formulas are defined by the following grammar:
\begin{align*}
\varphi :=\ a_\pi &\mid \neg \varphi \mid \varphi \land \varphi \mid \X \varphi \mid \LTLprevious \varphi \mid \varphi \U \varphi \mid \varphi \LTLsince \varphi \\
&\mid \quant \pi \in X \ldot \varphi \mid \quant X\ldot \varphi
\end{align*}
where $a \in \ap{}$, $\pi \in \tracevars{}$, $X \in \sovars{}$, and $\quant \in \{\forall, \exists\}$ is a quantifier.
In \sohyperltlstar{}, we have the future temporal operators (strong) \emph{next} $\X$ and \emph{until} $\LTLuntil$, as well as their past counterparts \emph{previously} $\LTLprevious$ and \emph{since} $\LTLsince$.
We use the derived Boolean constants $\mathit{true}, \mathit{false}$, and connectives $\vee, \rightarrow, \leftrightarrow$, and the temporal operators \emph{eventually} $\LTLeventually \varphi = \mathit{true} \LTLuntil \varphi$, \emph{once} $\LTLpastfinally \varphi = \mathit{true} \LTLsince \varphi$, \emph{globally} $\LTLglobally \varphi = \neg \LTLeventually \neg \varphi$, and \emph{historically} $\LTLpastglobally \varphi = \neg \LTLpastfinally \neg \varphi$.

\paragraph{Semantics}
The semantics of \sohyperltlstar{} is defined with respect to a trace length $m \in \nat$ and a set of traces $\traceSet \subseteq \Sigma^m$.\footnote{We restrict to traces of the same length to obtain simpler semantics. Without this restriction, we would have to deal with combinations of traces with different lengths, making traversal difficult. In practice, we can either pad traces to the length of the longest trace or crop them to the length of the shortest trace (cf.~\cite{DBLP:conf/tacas/FinkbeinerHST18,DBLP:journals/fmsd/FinkbeinerHST19}).}
As for HyperLTL, we use a trace assignment $\Pi : \tracevars \rightharpoonup \Sigma^m$ mapping trace variables in $\tracevars$ to finite traces of length $m$. 
Additionally, we maintain a second-order assignment $\Delta : \sovars{} \rightharpoonup 2^{\Sigma^m}$ mapping variables to sets of finite traces of length $m$.
The semantics is then as follows:
\begin{align*}
	\Pi, \Delta, i &\models_\traceSet  a_\pi &\text{iff} \quad  &a \in \Pi(\pi)(i)\\
	\Pi, \Delta, i &\models_\traceSet  \neg \varphi &\text{iff} \quad & \Pi, \Delta, i \not\models_\traceSet  \varphi \\
	\Pi, \Delta, i &\models_\traceSet  \varphi_1 \land \varphi_2 &\text{iff} \quad  &\Pi, \Delta, i \models_\traceSet \varphi_1 \text{ and }  \Pi, \Delta, i \models_\traceSet  \varphi_2\\
	\Pi, \Delta, i &\models_\traceSet  \X  \varphi &\text{iff} \quad &i < m - 1 \text{ and }
\Pi, \Delta, i+1 \models_\traceSet \varphi \\
    \Pi, \Delta, i &\models_\traceSet  \LTLprevious  \varphi &\text{iff} \quad & i > 0 \text{ and } \Pi, \Delta, i-1 \models_\traceSet \varphi \\
	\Pi, \Delta, i &\models_\traceSet  \varphi_1 \U \varphi_2 &\text{iff} \quad & \exists i \leq j < m\ldot \Pi, \Delta, j\models_\traceSet  \varphi_2 \text{ and }\\ 
    &&&\quad\forall i \leq k < j \ldot  \Pi, \Delta, k \models_\traceSet  \varphi_1\, \\
    \Pi, \Delta, i &\models_\traceSet  \varphi_1 \LTLsince \varphi_2 &\text{iff} \quad & \exists j \leq i\ldot  \Pi, \Delta, j\models_\traceSet  \varphi_2 \text{ and } \\ 
    &&&\quad\forall i \geq k > j \ldot  \Pi, \Delta, k \models_\traceSet  \varphi_1\, \\
    \Pi, \Delta, i &\models_\traceSet  \quant \pi \in X \ldot \varphi &\text{iff} \quad &\quant t \in \Delta(X) \ldot \Pi[\pi \mapsto t], \Delta, i \models_\traceSet  \varphi\\
	\Pi, \Delta, i &\models_\traceSet  \quant X \ldot \varphi &\text{iff} \quad &\quant A \subseteq \traceSet \ldot \Pi, \Delta[X \mapsto A], i \models_\traceSet  \varphi
\end{align*}
The temporal and boolean operators are evaluated as expected.
Atomic formula $a_\pi$ holds in step $i$ if $a$ holds in the $i$the step on the trace bound to $\pi$.
When quantifying over a trace $\pi \in X$, we quantify over all traces in its current second-order model (as given by $\Delta$), and when quantifying over a set of traces, we consider all possible subsets of $\traceSet$. 
A set of traces $\traceSet \subseteq \Sigma^m$ satisfies $\varphi$, written $\traceSet \models \varphi$, if $\emptyset, [\systemvar \mapsto \traceSet], 0 \models_\traceSet \varphi$.
Note that we bind the second-order variable $\systemvar$ to $\traceSet$; by writing $\forall \pi \in \systemvar$ and $\exists \pi \in \systemvar$ we can thus quantify over the traces in $\traceSet$ (as in HyperLTL).

\paragraph{HyperLTL and \sohyperltlstar{}}
There are four key differences between the logics \sohyperltlstar{} and \hyperltl{}:
Firstly, \sohyperltlstar{} adds past operators, which allow for exponentially more succinct definitions of properties~\cite{DBLP:journals/eatcs/Markey03} and offers a convenient syntax to specify knowledge properties (cf.~\Cref{ex:f-knowledge}).
Secondly, the semantics of \sohyperltlstar{} is defined w.r.t.~finite traces, which occur much more frequently in MAS-related domains \cite{DBLP:conf/ijcai/GiacomoV15,DBLP:journals/corr/abs-2305-08319,DBLP:conf/ijcai/GiacomoV13,DBLP:books/mit/FHMV1995}.
Thirdly, \sohyperltlstar{} allows quantification under temporal operators.
In \hyperltl{} \cite{ClarksonFKMRS14} or \sohyperltl{} \cite{BeutnerFFM23}, a formula consists of a quantifier prefix followed by a quantifier-free LTL body. 
In contrast, in \sohyperltlstar{}, we can interleave temporal operators and quantification. 
For example, $\forall \pi \in \systemvar\ldot \LTLeventually \forall \pi' \in \systemvar\ldot \LTLglobally(a_{\pi} \leftrightarrow a_{\pi'})$ states that on any trace, we can find a timepoint from which point on all other traces agree in $a$.
Note that despite interleaving temporal operators and quantification, \sohyperltlstar{} has a \emph{linear-time semantics} (different from, e.g., \hyperctls{} \cite{ClarksonFKMRS14}), which is crucial in our monitoring setting. 
And lastly, and most importantly, \sohyperltlstar{} features second-order quantification over \emph{sets of traces}. 
Note that while second-order quantification in \sohyperltl{} \cite{BeutnerFFM23} ranges over arbitrary sets of traces, \sohyperltlstar{} quantifies over subsets of $\traceSet$.

\begin{example}[Eventual Knowledge]\label{ex:f-knowledge}
	As a running example, we use the sender-receiver MAS modeled by transition system in \Cref{fig:sender:receiver}, which is a variation of an example presented in~\cite{Meyden98}. 
In this system, agent 1 attempts to send a message until it is either received by agent 2, or delayed for one step and received afterward.
The two agents each only have a partial view of the system: agent 1 observes the sending state (i.e., only observes AP $s$) and thus cannot distinguish if the message is immediately received or delayed for one step.
Likewise, agent 2 only observes if the message was received (modeled by AP $r$).
We want to express that whenever a message is received twice (i.e., two consecutive $r$s occur), agent 1 eventually knows that it was received.
As in epistemic logics, an agent knows a property on a trace if it holds on all indistinguishable traces, which we can express as follows:
\begin{align*}
\forall \pi \in \systemvar. \big(\LTLeventually (&r_\pi \wedge \LTLnext r_\pi) \big) \rightarrow \LTLeventually(\forall \pi' \in \systemvar\ldot \pi \sim_1 \pi' \to \LTLeventually r_{\pi'}),
\end{align*}
where we define %
$\pi \sim_1 \pi' := \LTLpastglobally (s_\pi \leftrightarrow s_{\pi'})$
to state that agent 1 cannot distinguish $\pi$ and $\pi'$ up to the current time step.
Here, we quantify over all traces $\pi$ that visit the receiving state twice.
For any such $\pi$, we require that \emph{eventually}, all traces $\pi'$ that are -- up to the current time point -- indistinguishable from $\pi$, satisfy $\LTLeventually r_{\pi'}$.
This property holds on the system in \Cref{fig:sender:receiver}.
For example, consider the trace $s^{n}rr$, where sending takes $n$ steps and immediately afterward the message is received. 
In the last time step, this trace is indistinguishable (for agent 1) from $s^{n}rr$ and $s^{n}dr$, and both of these traces satisfy $\LTLeventually r$.\demo
\end{example}
\begin{figure}
    \centering
    \resizebox{.7\linewidth}{!}{
    \tikzstyle{state}=[draw, circle, fill=none, inner sep=1mm, thick]

\begin{tikzpicture}[->,>=stealth',shorten >= 1pt,auto]

\node[state, scale=1.1, minimum size=18pt] (s1) [] {%
   \small$d$
 };

\node[state, scale=1.1, minimum size = 18pt] (s0) [left = 0.2 and 1 of s1]{%
   \small$s$
};

\node (left) [left = 0.5 of s0, draw=none]{};

\node[state, scale=1.1, minimum size = 18pt] (s2) [right = 0.2 and 1 of s1] {%
    \small$r$
 };

\begin{scope}[on background layer]
	\node[fit=(s0) (s1), ellipse, draw=orange, very thick, dashed, fill=orange!20, yshift=0.18cm, fill opacity = 0.5, minimum height=44pt] (s0s1) {};
	\node[fit=(s1) (s2), ellipse, rotate=0, very thick, dotted, yshift=0.18cm, fill=cyan!20, draw=cyan, fill opacity=0.5, minimum height=44pt] (s1s2) {};
\end{scope}

\path (left) edge[thick] (s0)
      (s0) edge[thick, bend left=0, align=center] node[above, sloped] {%
      } (s1)
      (s0) edge[loop above, thick,  align=center] node[above,sloped] {%
      } (s0)
      (s0) edge[thick, bend left, align=center] node[sloped,below] {%
      } (s2)
      (s1) edge[thick, bend right=0, align=center] node[sloped,below] {%
      } (s2)
      (s2) edge[loop above, thick, align=center] node[sloped,below] {%
      } (s2)
      ;
\end{tikzpicture}
    }
    \caption{A transition system modelling a sender-receiver MAS. Agent 1 only observes AP $s$, and thus cannot distinguish between the states in the blue (dotted) area. %
    Agent 2 only observes %
    AP $r$ and thus cannot distinguish between the states in the orange (dashed) area.%
    }
    \label{fig:sender:receiver}
\end{figure}
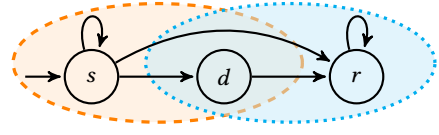

\paragraph{Fixpoint Formulas}

As already observed in \cite{BeutnerFFM23}, full second-order quantification is hard to handle algorithmically, particularly in a monitoring setting:
If the monitor has seen $n$ finite traces, any of the $2^n$ subsets might qualify as a witness and need to be checked. 
Fortunately, full second-order quantification is not needed for most properties of interest. 
Instead, most second-order sets are defined in terms of a monotone fixpoint.
Similar to \cite{BeutnerFFM23}, we extend the syntax of \sohyperltlstar{} with a dedicated fixpoint construct, which allows for a convenient definition of such fixpoint-based sets of traces. 
Moreover, our monitoring algorithm can exploit the monotonicity properties of fixpoints. 
The extended syntax of \sohyperltlstar{} is then the following:
\begin{align*}
\varphi := a_\pi &\mid \neg \varphi \mid \varphi \land \varphi \mid \X \varphi \mid \LTLprevious \varphi \mid \varphi \U \varphi \mid \varphi \LTLsince \varphi \\
& \mid \quant \pi \in X \ldot \varphi \mid \quant X\ldot \varphi \mid \mathit{fix}(X, \xi_1, \ldots, \xi_k)\ldot \varphi.
\end{align*}
The fixpoint operator $\mathit{fix}(X, \xi_1, \ldots, \xi_k)\ldot \varphi$ constructs a unique set of traces $X \in \sovars{}$ that can be used in $\varphi$.
This set $X$ is uniquely defined by the fixpoint constraints $\xi_1, \ldots, \xi_k$ of the form 
\begin{align}\label{eq:fix}
    \forall \pi_1 \in X_1\ldots \forall \pi_n \in X_n\ldot \varphi_\mathit{step} \to \pi \in X,
\end{align}
where $X_1, \ldots, X_n \in \sovars{}$, $\pi, \pi_1, \ldots, \pi_n \in \tracevars{}$, and $\varphi_\mathit{step}$ is a quantifier-free formula.
To be well-formed, (1) all trace variables used in $\varphi_\mathit{step}$ must either be quantified outside or be one of $\pi_1, \ldots, \pi_n$; (2) all second-order variables $X_1, \ldots, X_n$ must be quantified outside or be equal to $X$, i.e., within the definition of $X$ we can quantify over traces in $X$ (as is usual for fixpoint definitions); and (3) $\pi$ must be quantified outside or be one of $\pi_1, \ldots, \pi_n$.
Intuitively, \Cref{eq:fix} states a requirement on traces that should be included in $X$. 
If we find traces $t_1 \in X_1, \ldots, t_n \in X_n$ that, together with the sets and traces quantified outside of $\mathit{fix}(X, \xi_1, \ldots, \xi_k)\ldot \varphi$, satisfy $\varphi_\mathit{step}$, then trace $\pi$ should be added to $X$.
In our semantics, we define $\Pi, \Delta, i \models_\traceSet \mathit{fix}(X, \xi_1, \ldots, \xi_k)\ldot \varphi$ iff
\begin{align*}
    \Pi, \Delta\big[X \mapsto \mathit{sol}(\Pi, \Delta, i, X, \xi_1, \ldots, \xi_k)\big], i \models_\traceSet  \varphi,
\end{align*}
where $\mathit{sol}(\Pi, \Delta, i, X, \xi_1, \ldots, \xi_k)$ denotes the unique solution to the fixpoint definition of $X$. 
Formally, $\mathit{sol}(\Pi, \Delta, i, X, \xi_1, \ldots, \xi_k)$ is the \emph{smallest} set of traces such that \emph{for all} $\xi_i$ we have
\begin{align*}
    \Pi, \Delta[X \mapsto \mathit{sol}(\Pi, \Delta, i, X, \xi_1, \ldots, \xi_k)], i \models_\traceSet \xi_i.
\end{align*}
Note that the fixpoint solution $\mathit{sol}(\Pi, \Delta, i, X, \xi_1, \ldots, \xi_k)$ is uniquely defined.\footnote{Given $\Pi, \Delta, i$, let $f : 2^\traceSet \to 2^\traceSet$ be the function that -- given a current model for X -- returns X with all traces that should be added according to one of the $\xi_i$.
It is easy to see that $f$ is monotone in the subset order on $2^\traceSet$.
Each fixpoint of $f$ now corresponds to a set $X$ that satisfies $\xi_1, \ldots, \xi_k$.
By Knaster-Tarski \cite{tarski1955lattice}, there exists a \emph{unique} smallest fixpoint of $f$, which is exactly $\mathit{sol}(\Pi, \Delta, i, X, \xi_1, \ldots, \xi_k)$. }
Using this fixpoint construct, we can easily express CK:

\begin{example}[Common Knowledge and Fixpoints]\label{ex:common:knowledge:fixpoint}
We illustrate our fixpoint construct by continuing on \Cref{ex:f-knowledge}.
Assume we do not want to state that on all traces where $\LTLeventually (r \wedge \LTLnext r)$ holds, agent 1 knows that $\LTLeventually r$ holds (cf.~\Cref{ex:f-knowledge}), but rather that it is \emph{common knowledge} in the group of agents 1 and 2 that $\LTLeventually r$ holds.
As argued in \Cref{sec:intro}, this requires us to iteratively compute the set of all traces that cannot be distinguished by some agent.
While the formulation of CK in \Cref{sec:intro} expressed this using full second-order quantification, we can observe that the set we are interested in is actually defined by a fixpoint:
\begin{align*}
    &\forall \pi \in \systemvar. \big(\LTLeventually (r_\pi \wedge \LTLnext r_\pi) \big) \rightarrow \LTLeventually \Big(\mathit{fix}(X, \xi_1, \xi_2) \ldot \forall \pi' \in X\ldot \LTLeventually r_{\pi'} \Big),
\end{align*}
where $\xi_1 := \mathit{true} \to \pi \in X$, and $\xi_2$ is defined as 
\begin{align*}
    \forall \pi_1 \in X. \forall \pi_2 \in \systemvar{}\ldot \big( \pi_1 \sim_1 \pi_2 \lor \pi_1 \sim_2 \pi_2 \big)\to \pi_2 \in X.
\end{align*}
Here, we define $\pi \sim_2 \pi' := \LTLpastglobally (r_\pi \leftrightarrow r_{\pi'})$, similar to $\sim_1$ in \Cref{ex:f-knowledge}.
For each $\pi$, we specify a fixpoint-defined set $X$. 
This set $X$ includes $\pi$ (constraint $\xi_1$), and whenever we find some $\pi_1 \in X$ that some agent cannot distinguish from some $\pi_2 \in \systemvar$, we add $\pi_2$ to $X$ (constraint $\xi_2$).
Then, we state that all traces $\pi'$ in $X$ should satisfy $\LTLeventually r_{\pi'}$.
The above formula does not hold on the MAS modeled in \Cref{fig:sender:receiver}. 
Consider the trace $\pi = sr^n$.
It is easy to see that
\[sr^n \sim_1 sdr^{n-1} \sim_2 s^2r^{n-1} \sim_1 s^2dr^{n-2} \sim_2 \ldots \sim_2 s^{n}r \sim_1 s^{n}d.\]
Consequently, the trace $s^{n}d$ will be included in $X$, disproving that $\LTLeventually r$ is common knowledge on $\pi$.\demo
\end{example}

\section{Monitorability}

We are interested in \emph{monitoring} a \sohyperltlstar{} formula, i.e., we do not have access to the system as a whole but rather observe executions of the system and conclude whether the set of traces we have seen so far suffices to conclude the satisfaction or violation of the property. 
This is, unsurprisingly, not possible for all properties.
Some properties are not \emph{monitorable}, i.e., never allow a positive or negative answer from the monitor.

\subsection{Monitorability of Trace Properties}

Let us recall the concept of monitorability in the simpler setting of a trace property, where we observe one step of the trace at a time.

\begin{definition}\label{def:mon}
    A trace property  $L \subseteq \Sigma^*$ is monitorable if 
\begin{align*}
    \forall u \in \Sigma^*\ldot \exists v \in \Sigma^*\ldot (\forall w \in \Sigma^*\ldot uvw \in L) \lor (\forall w \in \Sigma^*\ldot uvw \not\in L).
\end{align*}
\end{definition}

Intuitively, the definition states that -- whatever finite trace $u$ we start from -- some extension $uv$ of $u$ allows the monitor to report the satisfaction or violation of $L$. 
This can either be the case because all extensions of $uv$ satisfy $L$ (the first disjunct) or all extensions violate $L$ (the second disjunct).\footnote{The sets $\mathit{good}(L) := \{u \in \Sigma^* \mid \forall v \in \Sigma^*\ldot uv \in L\}$ and $\mathit{bad}(L) := \{u \in \Sigma^* \mid \forall v \in \Sigma^*\ldot uv \not\in L\}$ described by the disjunctions are commonly referred to as the set of good and bad prefixed, respectively. }
In this paper, we are not monitoring a trace property but a hyperproperty, i.e., we may need to observe multiple traces.
Depending on the setting in which the monitor is deployed, there can be different modes in which those traces are presented to the monitor. 
We focus on the \emph{parallel model} and the \emph{sequential model} \cite{DBLP:journals/fmsd/FinkbeinerHST19}.
In the former, the number of traces is fixed \emph{a priori}, and the monitor observes consecutive time steps one by one.
In the latter model, the monitor observes the (finite) traces sequentially, increasing the cardinality of the trace set in every time step.
We discuss both models and their impact on the monitorability of \sohyperltlstar{} properties in the following.

\subsection{The Parallel Model}

\newcommand{\unfold}[1]{\llbracket #1 \rrbracket}
In the parallel model, the number of traces (executions) is fixed to some number $b \in \nat$, and each time step reveals an additional position on each of the traces. 
Similar to the monitorability of trace properties, the monitor thus gets presented more and more positions on the same $b$ traces.
We refer to~\cite{DBLP:journals/fmsd/FinkbeinerHST19} for more details.

As the parallel model \emph{fixes} the number of sessions (traces), it turns out that second-order quantification does not yield any additional expressiveness.
Intuitively, quantifying over a set of traces is equivalent to quantifying over some finite subset of at most $b$ traces, which is expressible in HyperLTL with first-order trace quantification. 
Formally, given bound $b \in \nat$ and a \sohyperltlstar{} formula $\varphi$, we can unfold $\varphi$ into a formula that uses no second-order quantification and is equivalent on any set of traces with at most $b$ traces.
We maintain a partial function $M : \sovars{} \rightharpoonup 2^\tracevars{}$ mapping second-order variables to a set of trace variables. 
We then recursively define the \sohyperltlstar{} formula $\unfold{\varphi}_{b, M}$ as follows:
\begin{align*}
    \unfold{a_\pi}_{b, M} &:= a_\pi\\
    \unfold{\circ \varphi}_{b, M} &:= \circ \unfold{\varphi}_{b, M} \quad \text{for } \circ \in \{\neg, \X, \LTLprevious\}\\
    \unfold{\varphi_1 \circ \varphi_2}_{b, M} &:= \unfold{\varphi_1}_{b, M} \circ \unfold{\varphi_2}_{b, M} \quad \text{for } \circ \in \{\land, \LTLuntil, \LTLsince\}\\
    \unfold{\exists X\ldot \varphi}_{b, M} &:= \exists \pi_1, \ldots, \pi_b\ldot \unfold{\varphi}_{b, M[X \mapsto \{\pi_1, \ldots, \pi_b\}]}\\
    \unfold{\forall X\ldot \varphi}_{b, M} &:= \forall \pi_1, \ldots, \pi_b\ldot \unfold{\varphi}_{b, M[X \mapsto \{\pi_1, \ldots, \pi_b\}]}\\
    \unfold{\exists \pi \in X\ldot \varphi}_{b, M} &:= \bigvee_{\pi' \in M(X)} \unfold{\varphi}_{b, M}[\pi'/\pi]\\
    \unfold{\forall \pi \in X\ldot \varphi}_{b, M} &:= \bigwedge_{\pi' \in M(X)} \unfold{\varphi}_{b, M}[\pi'/\pi]
\end{align*}
Here, $\varphi[\pi'/\pi]$ denotes the formula in which all free occurrences of $\pi$ are replaced with $\pi'$.
For quantifier-free formulas, we simply maintain the structure. 
Since we assume that there are at most $b$ traces, we can replace second-order quantification over a set $X$ with first-order quantification over $b$ \emph{fresh} (first-order) trace variables $\pi_1, \ldots, \pi_b$.
We record the first-order trace variables that we used to replace $X$ within the auxiliary mapping $M$.
For first-order quantification $\forall \pi \in X$ (resp.~$\exists \pi \in X$), we then conjunctively (resp.~disjunctively) consider all traces $\pi' \in M(X)$ in place of $\pi$.
A simple induction shows:

\begin{proposition}\label{prop:so-free}
    Let $\varphi$ be any \sohyperltlstar{} formula and $b \in \nat$. For any set of trace $\mathbb{T}$ with $|\mathbb{T}| \leq b$ we have $\mathbb{T} \models \varphi$ iff $\mathbb{T} \models \unfold{\varphi}_{b, \emptyset}$.
\end{proposition}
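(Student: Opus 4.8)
\subsection*{Proof proposal}

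The plan is to prove a \emph{generalization} of the statement by structural induction on $\varphi$, since the bare claim about closed formulas does not carry enough information through the quantifier cases. The generalization couples the second-order assignment $\Delta$ with the auxiliary map $M$: for a trace assignment $\Pi$, a second-order assignment $\Delta$ with $\Delta(\systemvar) = \traceSet$, a position $i$, and a partial map $M : \sovars \rightharpoonup 2^{\tracevars}$, call $(\Pi, \Delta, M)$ \emph{compatible} if for every $X \in \mathrm{dom}(M)$ we have $\Delta(X) = \{\Pi(\pi') \mid \pi' \in M(X)\}$, i.e.\ the traces that $\Pi$ assigns to the variable bundle recorded in $M(X)$ form exactly the set $\Delta$ assigns to $X$. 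I would then show, for every subformula $\varphi$ whose free second-order variables lie in $\mathrm{dom}(M) \cup \{\systemvar\}$ and whose free trace variables lie in $\mathrm{dom}(\Pi)$, that $\Pi, \Delta, i \models_\traceSet \varphi$ iff $\Pi, \Delta, i \models_\traceSet \unfold{\varphi}_{b, M}$. Taking $\Pi = \emptyset$, $\Delta = [\systemvar \mapsto \traceSet]$, $M = \emptyset$, and $i = 0$ recovers the proposition.

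The atomic, Boolean, and temporal cases are immediate: $\unfold{\cdot}$ leaves the structure untouched, so the equivalence descends to the structurally smaller operands, and the semantics of $\neg, \land, \X, \LTLprevious, \U, \LTLsince$ refer only to $\Pi$, $\Delta$, and the position, which are unchanged. The relativized first-order cases are where $M$ is consumed: by compatibility, $\quant t \in \Delta(X)$ ranges over exactly the traces $\{\Pi(\pi') \mid \pi' \in M(X)\}$, so $\exists \pi \in X$ (resp.\ $\forall \pi \in X$) is equivalent to the finite disjunction $\bigvee_{\pi' \in M(X)}$ (resp.\ conjunction) over the already-bound variables, matching the definition of $\unfold{\cdot}$ after the substitution $[\pi'/\pi]$. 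Here one applies the induction hypothesis to the body together with a standard renaming lemma, namely that substituting $\pi'$ for $\pi$ has the same semantic effect as rebinding $\pi$ to $\Pi(\pi')$.

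The crucial cases are the second-order quantifiers. For $\exists X \ldot \varphi$, the left-hand side picks a witness set $A \subseteq \traceSet$, while $\unfold{\exists X \ldot \varphi}_{b,M}$ picks $b$ traces $t_1, \dots, t_b \in \traceSet$ for the fresh variables $\pi_1, \dots, \pi_b$ and records $M(X) = \{\pi_1, \dots, \pi_b\}$. The bridge is that, because $|\traceSet| \le b$, the sets realizable as $\{t_1, \dots, t_b\}$ are precisely the non-empty subsets of $\traceSet$: given a non-empty $A$ I enumerate its elements into $t_1, \dots, t_b$ with repetition so that $\{t_1, \dots, t_b\} = A$, which makes $(\Pi[\pi_1 \mapsto t_1, \dots, \pi_b \mapsto t_b], \Delta[X \mapsto A], M[X \mapsto \{\pi_1, \dots, \pi_b\}])$ compatible, and the induction hypothesis then transfers the body. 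The $\forall X$ case is dual, using that a universal over all tuples $(t_1, \dots, t_b)$ coincides with a universal over all non-empty subsets of $\traceSet$.

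The main obstacle is exactly the empty set: a $b$-tuple of traces (with $b \ge 1$) never realizes $A = \emptyset$, so the reduction faithfully captures second-order quantification only over non-empty subsets. I would therefore either restrict to the fragment in which an empty witness is never decisive (covering all properties of interest, e.g.\ common knowledge, whose fixpoint sets are always seeded with at least one trace via a constraint like $\mathit{true} \to \pi \in X$), or augment $\unfold{\cdot}$ with a dedicated encoding of the empty set; arguing that this gap is harmless for the intended specifications is the only nonroutine part. A minor bookkeeping point is $\systemvar$, which is never placed in $M$: it is handled uniformly by maintaining the invariant $\Delta(\systemvar) = \traceSet$ and reading $\quant \pi \in \systemvar$ as quantification over $\traceSet$ on both sides.
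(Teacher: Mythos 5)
Your induction is exactly the argument the paper has in mind: the paper offers no written proof beyond the remark that ``a simple induction shows'' the claim, and the strengthened statement you formulate --- relating $\Pi, \Delta, i \models_\traceSet \varphi$ to $\Pi, \Delta, i \models_\traceSet \unfold{\varphi}_{b,M}$ under the compatibility invariant $\Delta(X) = \{\Pi(\pi') \mid \pi' \in M(X)\}$ --- is precisely the induction hypothesis needed to push through the quantifier cases. The one place where you go beyond the paper is the empty-set observation, and you are right to flag it: as the unfolding is literally written, a block $\exists \pi_1, \ldots, \pi_b$ can only realize non-empty subsets of $\traceSet$, so degenerate formulas such as $\exists X \ldot \forall \pi \in X \ldot \mathit{false}$ (satisfied via $A = \emptyset$) or $\forall X \ldot \exists \pi \in X \ldot \mathit{true}$ (violated via $A = \emptyset$) separate $\varphi$ from $\unfold{\varphi}_{b,\emptyset}$. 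This is a small imprecision in the proposition as stated for \emph{arbitrary} formulas, not a defect of your proof; the clean repair is your second option: add to $\unfold{\exists X \ldot \varphi}_{b,M}$ the extra disjunct $\unfold{\varphi}_{b, M[X \mapsto \emptyset]}$ (and dually an extra conjunct for $\forall X$), under which the finite disjunction (resp.\ conjunction) over $M(X) = \emptyset$ degenerates to $\mathit{false}$ (resp.\ $\mathit{true}$) and the induction closes for all formulas. With that amendment --- or with the convention that second-order quantifiers range over non-empty sets --- your proof is complete and follows the paper's intended route, handling the renaming lemma and the enumeration-with-repetition bridge for $|\traceSet| \le b$ exactly as one would expect.
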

Assuming a fixed bound $b \in \nat$ (as in the parallel model), \Cref{prop:so-free} states that monitoring a \sohyperltlstar{} formula $\varphi$ is equivalent to monitoring $\unfold{\varphi}_{b, \emptyset}$.
As $\unfold{\varphi}_{b, \emptyset}$ uses \emph{no} second-order quantification, we can apply all the techniques and tools developed for first-order logics like \hyperltl{} \cite{DBLP:journals/fmsd/FinkbeinerHST19}.

\subsection{The Sequential Model}

We now consider the sequential model, the main object of study in this paper. 
In this model, traces arrive one at a time and the number of traces is \emph{unbounded}.
Our model is similar to the sequential model used in \cite{DBLP:journals/fmsd/FinkbeinerHST19} for \hyperltl{}, with the exception that traces in our model are of finite (instead of infinite) length. 
This focus on finite traces makes our approach (and tool) applicable to many MAS-related settings, where executions are finite, and tasks are subsequent system executions.

\begin{definition}\label{def:hypermonitorabilitysequential}
A hyperproperty  $H \subseteq 2^{\Sigma^*}$ is monitorable in the sequential model if
\begin{align*}
	&\forall U \subseteq \Sigma^*\ldot \exists V \subseteq \Sigma^*.\\
    &\quad\big(\forall W \subseteq \Sigma^*\ldot (U \cup \! V  \cup  W) \in H\big) \lor \big(\forall W \subseteq \Sigma^*\ldot (U  \cup V \cup W) \not\in H\big).
\end{align*}

\end{definition}
That is, for any $U$ there exists some $V$ such that $U \cup V$ allows some definitive answer by the monitor. 
This can either be the case because all extensions of $U \cup V$ satisfy $H$ (the first disjunct) or all extensions violate $H$ (the second disjunct).

\begin{example}
Consider the CK formula in \Cref{ex:common:knowledge:fixpoint}.
The formula is monitorable in the sequential model:
We can always add traces to ensure that CK does not hold, i.e., add some indistinguishable trace that does not satisfy $\LTLeventually r$.
No matter what additional traces are observed, CK remains violated.\demo

\end{example}

\begin{theorem}\label{th:undecidability}
Deciding if a \sohyperltlstar{} formula is monitorable in the sequential model is undecidable.
\end{theorem}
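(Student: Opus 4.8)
The plan is to prove undecidability by a computable many-one reduction from the satisfiability problem of \sohyperltlstar{} (equivalently, from an alternation fragment whose satisfiability is already undecidable). That satisfiability is undecidable can be inherited from \hyperltl{}-style arguments: \sohyperltlstar{} can encode an unbounded computation as a (possibly infinite) set of finite traces, each trace representing one configuration, and use the fixpoint construct together with $\forall/\exists$ alternation to force consistency between successive configurations, thereby encoding Turing-machine halting or a tiling problem. From a closed formula $\psi$ I would compute a formula $\varphi_\psi$ such that $\varphi_\psi$ is monitorable in the sequential model if and only if $\psi$ is \emph{un}satisfiable; since the latter is undecidable, so is the former.

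The reduction rests on the reading of \Cref{def:hypermonitorabilitysequential} as \emph{persistent ambiguity}: a hyperproperty $H$ fails to be monitorable exactly when there is some $U$ such that for every $V$ the set $U \cup V$ admits both an extension inside $H$ and an extension outside $H$. The idea is to build $\varphi_\psi$ around a gadget using a fresh atomic proposition that marks a distinguished class of traces, relativizing every quantifier of $\psi$ to the unmarked traces and adding marked ``slack'' traces whose sole purpose is to keep the verdict open. When $\psi$ is unsatisfiable, no set of unmarked traces can ever form a model, so the induced hyperproperty is empty; every prefix is then a bad prefix and $\varphi_\psi$ is trivially monitorable. When $\psi$ is satisfiable, the gadget must guarantee that from every prefix one can still both add traces that drive the set toward a model (toward satisfaction) and inject a trace that rules out satisfaction (toward violation), so that the empty prefix is persistently ambiguous and $\varphi_\psi$ is \emph{not} monitorable.

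I would carry out the argument in four steps. First, establish the base undecidability: show satisfiability over sets of finite traces is undecidable for the relevant fragment via the configuration-encoding above. Second, define $\varphi_\psi$ precisely, introducing the marking proposition, relativizing the quantifiers of $\psi$, and phrasing the body so that models of $\psi$ are always ``approachable but never confirmable.'' Third, prove the two directions of the equivalence: (a) unsatisfiable $\psi$ yields an empty hyperproperty, hence monitorability; (b) satisfiable $\psi$ yields persistent ambiguity of the empty prefix, hence non-monitorability. Fourth, note that $\psi \mapsto \varphi_\psi$ is computable, completing the reduction.

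The main obstacle I expect is step (b): designing the gadget so that, in the satisfiable case, \emph{every} observed set $U \cup V$ genuinely remains on the fence. A naive $\varphi_\psi$ asserting only ``the unmarked traces form a model of $\psi$'' fails, because a satisfiable $\psi$ with only finite or rigid models can still be decided after finitely many observations. Forcing persistent ambiguity requires putting $\psi$ into a liveness-like, upward-extendable form in which existential witnesses can always be postponed and a fresh violating trace can always be appended, and it must interact correctly with the finite-trace semantics and with the fact that second-order quantification in \sohyperltlstar{} ranges only over subsets of the current $\traceSet$. Ensuring the relativized second-order quantifiers in $\varphi_\psi$ see exactly the intended candidate traces — neither the marked slack traces nor any future traces prematurely — is the crux of the construction.
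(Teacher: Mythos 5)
Your proposal is a plan rather than a proof, and the plan's central component is missing: you never construct the gadget $\varphi_\psi$, and you yourself identify the decisive step---showing that a satisfiable $\psi$ forces \emph{persistent ambiguity} of every prefix $U \cup V$---as an unresolved obstacle. That concern is well-founded: relativizing $\psi$ to unmarked traces and adding slack traces does not by itself prevent a finite observation from settling the verdict (e.g., when $\psi$ has a finite model that happens to be observed, or when a single observed trace already falsifies a universally quantified conjunct of $\psi$ irrecoverably). Until that gadget is exhibited and both directions of the equivalence are proved, the reduction does not go through. You would additionally need to establish the base case, namely undecidability of satisfiability for the relevant fragment of \sohyperltlstar{} over sets of \emph{finite} traces, which you assert but do not prove.

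More importantly, you are working much too hard. The paper's proof is a two-line inheritance argument: \citet{DBLP:journals/fmsd/FinkbeinerHST19} already showed that monitorability in the sequential model is undecidable for \hyperltl{} with finite-trace semantics, and finite-trace \hyperltl{} is a syntactic and semantic fragment of \sohyperltlstar{} (take formulas with no second-order quantifiers other than $\systemvar$ and no fixpoint constructs). Any algorithm deciding monitorability for all \sohyperltlstar{} formulas would in particular decide it for all formulas of that fragment, contradicting the known result. So the theorem follows immediately by inclusion, with no new reduction, no gadget, and no satisfiability argument needed. I would recommend you first check whether the target class contains a fragment for which the problem is already known to be undecidable before designing a reduction from scratch.
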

\begin{proof}
\citet{DBLP:journals/fmsd/FinkbeinerHST19} showed that monitorability is undecidable for \hyperltl{} with finite traces in the sequential model. 
Since \hyperltl{} with finite-trace semantics is a strict subset of \sohyperltlstar, monitorability of \sohyperltlstar{} is undecidable. 
\end{proof}

\section{Monotonicity}\label{sec:monotonicity}
In the following, we focus on the sequential model.
\Cref{th:undecidability} rules out the possibility of an algorithm that monitors all possible \sohyperltlstar{} formulas. 
Instead, we focus on fragments of formulas for which we can provide definitive answers. 
Given a finite set of traces, we want to provide a monitoring result irrespective of what traces will arrive in the future. 
To ensure this, the truth value of the formula may not change when additional traces arrive; that is, it should be \emph{monotone}.
We distinguish between $\mon$-monotonicty and $\antimon$-monotonicty.
In $\mon$-monotonicty, any model that \emph{satsfies} the formula will continue to satisfy it no matter what traces are added. 
In $\antimon$-monotonicty, any model that \emph{violates} the formula will continue to violate it no matter what traces are added.

\begin{definition}\label{def:monotone}
    A \sohyperltlstar{} formula $\varphi$ is $\mon$-monotone (resp.~$\antimon$-monotone) if for any set of traces $\mathbb{T}$ such that $\mathbb{T} \models \varphi$ (resp.~$\mathbb{T} \not\models \varphi$), for any larger set $\mathbb{T}' \supseteq \mathbb{T}$, we have $\mathbb{T}' \models \varphi$ (resp.~$\mathbb{T}' \not\models \varphi$).    
\end{definition}

We say that a formula is monotone if it is either $\mon$-monotone or $\antimon$-monotone. 
Once we detect that a $\mon$-monotone (resp.~$\antimon$-monotone) formula holds (resp.~is violated) on the set of traces seen so far, our monitor can conclude that $\varphi$ holds (resp.~does not hold) regardless of what traces will be observed in the future.
The CK formula in \Cref{ex:common:knowledge:fixpoint} is $\antimon$-monotone, i.e., whenever the current set of traces violates CK, no additional set of traces can change it.
In many cases, monotonicity implies monitorability:

\begin{proposition}
Let $\varphi$ be a \sohyperltlstar{} formula. If one of the following holds, then $\varphi$ is monitorable. 
\begin{enumerate}
    \item $\varphi$ is $\mon$-monotone and has at least one finite model (i.e., a finite set of traces $\mathbb{T}$ such that $\mathbb{T} \models \varphi$). 
    \item $\varphi$ is $\antimon$-monotone and $\neg \varphi$ has at least one finite model. 
\end{enumerate}
\end{proposition}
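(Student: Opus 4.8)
The plan is to verify \Cref{def:hypermonitorabilitysequential} directly under each of the two hypotheses, in both cases using the guaranteed finite model as a \emph{seed} that can be added to any observation. The key point is that the definition only asks us to exhibit, for every $U \subseteq \Sigma^*$, \emph{some} extension $V$ after which one of the two disjuncts holds \emph{uniformly} over all future $W$; monotonicity is exactly what makes a single seed suffice, and in fact the same $V$ will work for every $U$.

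For case (1), let $\mathbb{T}_0$ be a finite set of traces with $\mathbb{T}_0 \models \varphi$, which exists by assumption. Given an arbitrary $U$, I would choose $V := \mathbb{T}_0$. Then for every $W \subseteq \Sigma^*$ we have $U \cup V \cup W \supseteq \mathbb{T}_0$, and since $\mathbb{T}_0 \models \varphi$ and $\varphi$ is $\mon$-monotone, \Cref{def:monotone} yields $U \cup V \cup W \models \varphi$. Hence the first disjunct of \Cref{def:hypermonitorabilitysequential} holds for this $V$, independently of $U$: the monitor can report satisfaction once it has seen $\mathbb{T}_0$, regardless of anything observed before or after.

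Case (2) is symmetric. Let $\mathbb{T}_0$ be a finite model of $\neg\varphi$, i.e.\ $\mathbb{T}_0 \not\models \varphi$, and again set $V := \mathbb{T}_0$ for an arbitrary $U$. For every $W$ we have $U \cup V \cup W \supseteq \mathbb{T}_0$, so $\antimon$-monotonicity gives $U \cup V \cup W \not\models \varphi$, establishing the second disjunct. In both cases the finiteness of $\mathbb{T}_0$ is what makes $V$ an extension that the monitor could actually observe after finitely many steps, which is the operationally relevant reading of monitorability.

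The argument itself is short; the only point that needs care is the interplay between the equal-length semantics of \sohyperltlstar{} and the mixed-length unions $U \cup V \cup W$ appearing in \Cref{def:hypermonitorabilitysequential}. I would make explicit that satisfaction on such a union is read through the same padding/cropping convention under which $\mon$- and $\antimon$-monotonicity are defined in \Cref{def:monotone}, so that the property ``a superset $\mathbb{T}' \supseteq \mathbb{T}$ preserves the verdict'' applies verbatim to $U \cup V \cup W \supseteq \mathbb{T}_0$. Once this convention is fixed there is no genuine obstacle: the content lies entirely in the seeding idea together with the correct pairing of monotonicity direction with disjunct ($\mon$ with satisfaction, $\antimon$ with violation).
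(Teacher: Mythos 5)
Your proof is correct and is exactly the argument the paper has in mind: the paper states only that the result ``directly follows from the definitions of monotonicity and monitorability,'' and your seeding of $V$ with the finite model $\mathbb{T}_0$, combined with the superset property $U \cup V \cup W \supseteq \mathbb{T}_0$ and the appropriate direction of monotonicity, is the direct unfolding of those definitions. Your closing remark about reconciling the equal-length semantics with the mixed-length unions in Definition~\ref{def:hypermonitorabilitysequential} is a reasonable point of care, but it does not change the approach.
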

The proof directly follows from the definitions of monotonicity (\Cref{def:monotone}) and monitorability (\Cref{def:hypermonitorabilitysequential}).
The converse statement does not hold, even for the first-order fragment of \sohyperltlstar.

\begin{example}
    Consider the formula 
    \begin{align*}
    	\varphi := \forall \pi. \exists \pi'. (\pi \neq \pi' \wedge \LTLglobally(a_\pi \leftrightarrow a_{\pi'})) \vee \exists \pi''. \LTLeventually(b_{\pi''})
    \end{align*}
    and the trace sets $\mathbb{T} := \big\{ \{a,c\}^m, \{ a\}^m \big\}$, $\mathbb{T}' := \mathbb{T}\cup \big\{ \{c\}^m \big\}$, and $\mathbb{T}'' := \mathbb{T}' \cup \big\{ \{b\}^m \big\}$.
    It is easy to see that $\mathbb{T}\models \varphi$, $\mathbb{T}' \not\models \varphi$, and $\mathbb{T}'' \models \varphi$. 
    Formula $\varphi$ thus cannot be monotone.
Yet, $\varphi$ is monitorable: for every set $U$, we can add $V :=\big \{ \{b\}^m\big\}$, and every extension of $U\cup V$ satisfies $\varphi$ (cf.~\Cref{def:hypermonitorabilitysequential}).\demo

\end{example}

\paragraph{An Inference System for Monotonicity}

\begin{figure}[!t]
\begin{mathpar}
 \inferrule
  { }
  {\Gamma \vdash a_\pi : \mon}

 \inferrule
  { }
  {\Gamma \vdash a_\pi : \antimon}
 
  \inferrule
  {\Gamma \vdash \varphi : \mon }
  {\Gamma \vdash \neg \varphi : \antimon}
  
  \inferrule
  {\Gamma \vdash \varphi : \antimon }
  {\Gamma \vdash \neg \varphi : \mon} \\

  \inferrule
  {\circ \in \{\X, \LTLprevious\} \\ \Gamma \vdash \varphi : \alpha}
  {\Gamma \vdash \circ \, \varphi : \alpha}

  \inferrule
  { \hspace{6.15em} \\ \Gamma \vdash \varphi_1 : \alpha \\\\
  \circ \in \{\land, \LTLuntil, \LTLsince\}  \\ \Gamma \vdash \varphi_2 : \alpha}
  {\Gamma \vdash \varphi_1 \circ \varphi_2 : \alpha} \\

  \inferrule
  {X \in \Gamma \\ \Gamma \vdash \varphi : \mon}
  {\Gamma \vdash \exists \pi \in X\ldot \varphi : \mon}

  \inferrule
  {X \in \Gamma \\ \Gamma \vdash \varphi : \antimon}
  {\Gamma \vdash \forall \pi \in X\ldot \varphi : \antimon} \\

  \inferrule
  {\Gamma \cup \{X\} \vdash \varphi : \alpha}
  {\Gamma \vdash \mathit{fix}(X, \xi_1, \ldots, \xi_k)\ldot \varphi : \alpha}

  \inferrule
  {\Gamma \vdash \varphi : \alpha}
  {\Gamma \vdash \quant X\ldot \varphi : \alpha}
 \end{mathpar}
	\vspace{-2mm}
     \caption{Inference system for monotonicity}
    \label{fig:type-system}
\end{figure}

In our monitoring algorithm, we use monotonicity to provide definitive monitoring answers. 
To statically determine the monotonicity of (sub)formulas, we use a deductive (type-like) inference system. 
The judgments in our system are of the form $\Gamma \vdash \varphi : \alpha$, where 
$\alpha \in \{ \mon, \antimon \}$, and 
$\Gamma = \{X_1, \ldots, X_n\}$ is a context assumption. 
Intuitively, $X \in \Gamma$ assumes that $X$ 
has a unique model that only grows when more traces arrive (i.e., the model is monotonically increasing); and $\Gamma \vdash \varphi : \mon$ (resp.~$\Gamma \vdash \varphi : \antimon$) implies that, under the context assumptions in $\Gamma$, $\varphi$ is $\mon$-monotone (resp.~$\antimon$-monotone).
The inference rules of our system are depicted in \Cref{fig:type-system}.
Most rules in our system are straightforward: atomic propositions are both $\mon$-monotone and $\antimon$-monotone, temporal operators preserve monotonicity, and negation ``flips'' monotonicity. 
More interesting are the rules for quantification. 
A formula $\varphi = \exists \pi \in X\ldot \varphi'$ is $\mon$-monotone if $\varphi'$ is $\mon$-monotone and 
$X \in \Gamma$ (i.e.,
the model of $X$ only increases):
assume that $\varphi$ holds on the current set of traces $\mathbb{T}$, i.e., 
there exists a trace $t\in X$ that is a witness for the satisfaction of $\varphi'$. 
As $X \in \Gamma$, the model of $X$ can only grow when more traces are added to $\mathbb{T}$, so we can always reuse $t$ as a witness.
Likewise, models of a fixpoint-defined second-order variable $X$ only grow larger with the arrival of new traces, so we add $X$ to $\Gamma$.
For full second-order quantification, we can make no assumption on how model(s) behave when future traces arrive, so we add no assumptions. 
In our inference system, we initially set the context to be $\Gamma = \{\systemvar \}$, as the set of all traces (bound to the special variable $\systemvar \in \sovars$) only grows larger.
An easy induction shows:

\begin{proposition}\label{prop:mon-type-system}
    Assume $\{\systemvar \} \vdash \varphi : \mon$ (resp.~$\{\systemvar \} \vdash \varphi : \antimon$), then $\varphi$ is $\mon$-monotone (resp.~$\antimon$-monotone).
\end{proposition}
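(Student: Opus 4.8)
The plan is to prove a strengthened statement by induction on the derivation of $\Gamma \vdash \varphi : \alpha$, generalizing the fixed context $\{\systemvar\}$ to an arbitrary $\Gamma$ and making the comparison between two trace sets explicit. Concretely, I fix two sets $\mathbb{T} \subseteq \mathbb{T}'$ of the common length $m$, a trace assignment $\Pi$, a position $i$, and two second-order assignments $\Delta, \Delta'$ with $\Delta(X) \subseteq \Delta'(X)$ for every free second-order variable $X$ (for $X \in \Gamma$ this inclusion reflects the intended growth of the model; for the remaining variables it will be arranged by the quantifier cases). The induction hypothesis reads: if $\Gamma \vdash \varphi : \mon$ then $\Pi, \Delta, i \models_{\mathbb{T}} \varphi$ implies $\Pi, \Delta', i \models_{\mathbb{T}'} \varphi$, and dually if $\Gamma \vdash \varphi : \antimon$ then $\Pi, \Delta, i \not\models_{\mathbb{T}} \varphi$ implies $\Pi, \Delta', i \not\models_{\mathbb{T}'} \varphi$. \Cref{prop:mon-type-system} then follows by instantiating $\Gamma := \{\systemvar\}$ and $\Delta := [\systemvar \mapsto \mathbb{T}]$, $\Delta' := [\systemvar \mapsto \mathbb{T}']$, which satisfy the invariant precisely because $\mathbb{T} \subseteq \mathbb{T}'$; unfolding \Cref{def:monotone} gives the claim.

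The routine cases come first. For an atom $a_\pi$ the truth value depends only on $\Pi(\pi)(i)$ and is independent of $\mathbb{T}$ and $\Delta$, so both polarities hold trivially, matching the two axioms. Negation swaps the two implications, matching the two negation rules. The unary and binary temporal and Boolean operators preserve the polarity $\alpha$: since $m$ is fixed, all quantifications over positions (the successor $i+1$, and the witness $j$ and intermediate $k$ in $\U$ and $\LTLsince$) range over the same index set in both models, so the required (non-)satisfactions transfer by applying the hypothesis to the immediate subformulas. For the first-order quantifier rules one uses $X \in \Gamma$: for $\exists \pi \in X\ldot \varphi : \mon$, a witness $t \in \Delta(X) \subseteq \Delta'(X)$ persists and remains a witness by the hypothesis; symmetrically, for $\forall \pi \in X\ldot \varphi : \antimon$ a witness $t \in \Delta(X) \subseteq \Delta'(X)$ of \emph{failure} persists. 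The polarity matching here is exactly what rules out the unsound combinations (a fresh trace could break $\forall \pi \in X\ldot \varphi : \mon$), so these rules are what keep the invariant true.

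The first interesting case is the fixpoint. For $\mathit{fix}(X, \xi_1, \ldots, \xi_k)\ldot \varphi$ I first show the least solution grows, i.e.\ $\mathit{sol}(\Pi, \Delta, i, X, \xi_1, \ldots, \xi_k) \subseteq \mathit{sol}(\Pi, \Delta', i, X, \xi_1, \ldots, \xi_k)$. This reuses the monotone functional $f$ from the footnote: each $\varphi_{\mathit{step}}$ is quantifier-free, so its truth depends only on $\Pi$ and $i$ and not on the ambient set or on $\Delta$, while every outer set $X_j$ in the constraints satisfies $\Delta(X_j) \subseteq \Delta'(X_j)$; hence the functional for $(\mathbb{T}', \Delta')$ dominates the one for $(\mathbb{T}, \Delta)$ pointwise, and monotonicity of least fixpoints (Knaster--Tarski) yields the inclusion. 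One then extends the context to $\Gamma \cup \{X\}$ and the assignments by the two solutions, which respect the invariant by the inclusion just shown, and applies the hypothesis to $\varphi$. The easy half of the second-order quantifier case is similar to the first-order one: for $\exists X\ldot \varphi : \mon$ a witness subset $A \subseteq \mathbb{T} \subseteq \mathbb{T}'$ is reused verbatim, and for $\forall X\ldot \varphi : \antimon$ a failing subset $A \subseteq \mathbb{T}$ is reused.

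The main obstacle is the \emph{crossing} polarities of the second-order quantifier rule, e.g.\ $\forall X\ldot \varphi : \mon$, where the rule does not add $X$ to $\Gamma$ yet the quantifier ranges over \emph{all} subsets $A' \subseteq \mathbb{T}'$, including ones containing traces outside $\mathbb{T}$, for which no witness subset of $\mathbb{T}$ is directly available. The resolution, and the reason the invariant must assert inclusion for \emph{every} free second-order variable rather than only those in $\Gamma$, is to restrict $A'$ to $A := A' \cap \mathbb{T} \subseteq \mathbb{T}$: the hypothesis at $\mathbb{T}$ gives satisfaction for $A$, and since the strengthened invariant already permits $\Delta[X \mapsto A](X) \subseteq \Delta'[X \mapsto A'](X)$, a single appeal to the induction hypothesis transfers satisfaction to $\mathbb{T}'$ under $A'$; the case $\exists X\ldot \varphi : \antimon$ is symmetric. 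The delicate point is thus setting up one invariant that simultaneously captures monotonicity in $\mathbb{T}$ \emph{and} in every second-order variable, so that the quantifier cases can regrow $A' \cap \mathbb{T}$ back to $A'$ without circularity; once this is in place, the remaining cases are the advertised easy induction.
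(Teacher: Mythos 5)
Your proof is correct and follows the route the paper itself takes, namely a structural induction on the typing derivation; the paper compresses this into ``an easy induction,'' and your strengthened invariant (tracking $\mathbb{T} \subseteq \mathbb{T}'$ together with pointwise inclusion $\Delta(X) \subseteq \Delta'(X)$ on all free second-order variables, not just those in $\Gamma$) is exactly the generalization needed to make that induction go through. The two details you make explicit --- the Knaster--Tarski domination argument showing that fixpoint solutions only grow, and the $A' \cap \mathbb{T}$ restriction that handles the crossing-polarity cases $\forall X\ldot \varphi : \mon$ and $\exists X\ldot \varphi : \antimon$ --- are precisely what the paper leaves implicit, and both are handled soundly.
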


Note that our system is not complete, i.e., the converse of \Cref{prop:mon-type-system} does not hold. 
It is easy to see that our system allows us to conclude that the CK property in \Cref{ex:common:knowledge:fixpoint} is $\antimon$-monotone.

\section{Monitoring Algorithm}\label{sec:alg}

\begin{algorithm}[!t]
    \caption{Monitoring algorithm for formula $\varphi$}\label{alg:monitor}
\begin{mycode}
$\mathit{monMap}$ = computeMonotonicity($\varphi$)
$\mathbb{T}$ = $\emptyset$
while ($t$ = getNextTrace()):
@@$\mathbb{T}$ = $\mathbb{T} \cup \{t\}$
@@$\mathit{res}$ = check($\emptyset$,$\emptyset$,$0$,$\mathbb{T}$,$\varphi$)
@@if $\mathit{res}$ = true and $\mon \in \mathit{monMap}$($\varphi$) then return SAT
@@if $\mathit{res}$ = false and $\antimon \in \mathit{monMap}$($\varphi$) then return UNSAT
\end{mycode}
\end{algorithm}

We depict our basic monitor for a formula $\varphi$ in \Cref{alg:monitor}.
As a first step, we (1) compute a monotonicity map $\mathit{monMap}$ using the inference system from \Cref{fig:type-system}. 
Here, $\mathit{monMap}$ maps each \emph{subformula} $\varphi'$ of $\varphi$ to $\mathit{monMap}(\varphi') \in  2^{\{\mon, \antimon\}}$ indicating whether $\varphi'$ -- in a given context $\Pi, \Delta, i$ --  is monotone; and (2) initialize the set $\mathbb{T}$ of system traces to the empty set. 
During the execution of the monitor, 
whenever a new trace $t$ arrives, we add it to $\mathbb{T}$ and check if $\mathbb{T}$
satisfies $\varphi$ by calling function %
\checkfun{} (which we discuss in \Cref{sec:inc}). 
As we argued in \Cref{sec:monotonicity}, our monitor can only provide a definitive \lstinline[style=default,language=custom-lang]|SAT| answer in case $\varphi$ is $\mon$-monotone and the current traces satisfy $\varphi$, and \lstinline[style=default,language=custom-lang]|UNSAT| in case it is $\antimon$-monotone and the current set of traces violates $\varphi$.
Otherwise, we await further traces and repeat.

\subsection{Incremental Model Checking}\label{sec:inc}

\begin{algorithm}[!t]
    \caption{Incremental model-checking}\label{alg:imc}
    
\newsavebox{\sati}
\begin{lrbox}{\sati}
	\begin{tikzpicture}[baseline=-0.6ex]
		\node[fill=black!8,outer sep=0pt,inner xsep=0pt, inner ysep=0pt, rounded corners=2pt, minimum height=\ht\strutbox+1pt] {\lstinline[style=default,language=custom-lang]|if $\mon \in \mathit{monMap}$($\varphi$) and $h_\mathit{sat}$($\Pi$,$\Delta$,$i$,$\varphi$) = true then|};
	\end{tikzpicture}
\end{lrbox}

\newsavebox{\satii}
\begin{lrbox}{\satii}
	\begin{tikzpicture}[baseline=-0.6ex]
		\node[fill=black!8,outer sep=0pt,inner xsep=0pt, inner ysep=0pt, rounded corners=2pt, minimum height=\ht\strutbox+1pt] {\lstinline[style=default,language=custom-lang]|return true|};
	\end{tikzpicture}
\end{lrbox}

\newsavebox{\satiii}
\begin{lrbox}{\satiii}
	\begin{tikzpicture}[baseline=-0.6ex]
		\node[fill=black!8,outer sep=0pt,inner xsep=1pt, inner ysep=0pt, rounded corners=2pt, minimum height=\ht\strutbox+1pt] {\lstinline[style=default,language=custom-lang]|if $\antimon \in \mathit{monMap}$($\varphi$) and $h_\mathit{sat}$($\Pi$,$\Delta$,$i$,$\varphi$) = false then|};
	\end{tikzpicture}
\end{lrbox}

\newsavebox{\satiiii}
\begin{lrbox}{\satiiii}
	\begin{tikzpicture}[baseline=-0.6ex]
		\node[fill=black!8,outer sep=0pt,inner xsep=1pt, inner ysep=0pt, rounded corners=2pt, minimum height=\ht\strutbox+1pt] {\lstinline[style=default,language=custom-lang]|return false|};
	\end{tikzpicture}
\end{lrbox}

\newsavebox{\satiiiii}
\begin{lrbox}{\satiiiii}
	\begin{tikzpicture}[baseline=-0.6ex]
		\node[fill=black!8,outer sep=0pt,inner xsep=1pt, inner ysep=0pt, rounded corners=2pt, minimum height=\ht\strutbox+1pt] {\lstinline[style=default,language=custom-lang]|$h_\mathit{sat}$($\Pi$,$\Delta$,$i$,$\varphi$) = $\mathit{res}$|};
	\end{tikzpicture}
\end{lrbox}

\newsavebox{\opti}
\begin{lrbox}{\opti}
	\begin{tikzpicture}[baseline=-0.6ex]
		\node[fill=black!8,outer sep=0pt,inner xsep=1pt, inner ysep=0pt, rounded corners=2pt, minimum height=\ht\strutbox+1pt] {\lstinline[style=default,language=custom-lang]|if $h_\mathit{wit}$($\Pi, \Delta, i,\varphi$) = $t$ then order($t$,$\Delta(X)$) else|};
	\end{tikzpicture}
\end{lrbox}

\newsavebox{\optii}
\begin{lrbox}{\optii}
	\begin{tikzpicture}[baseline=-0.6ex]
		\node[fill=black!8,outer sep=0pt,inner xsep=1pt, inner ysep=0pt, rounded corners=2pt, minimum height=\ht\strutbox+1pt] {\lstinline[style=default,language=custom-lang]|$h_\mathit{wit}$($\Pi$,$\Delta$,$i$,$\varphi$) = $t$|};
	\end{tikzpicture}
\end{lrbox}

\newsavebox{\fixi}
\begin{lrbox}{\fixi}
	\begin{tikzpicture}[baseline=-0.6ex]
		\node[fill=black!8,outer sep=0pt,inner xsep=1pt, inner ysep=0pt, rounded corners=2pt, minimum height=\ht\strutbox+1pt] {\lstinline[style=default,language=custom-lang]|if $h_\mathit{fix}$($\Pi$,$\Delta$,$i$,$\mathit{fix}(X, \xi_1, \ldots, \xi_k)$) = $A''$ then $A''$ else|};
	\end{tikzpicture}
\end{lrbox}

\newsavebox{\fixii}
\begin{lrbox}{\fixii}
	\begin{tikzpicture}[baseline=-0.6ex]
		\node[fill=black!8,outer sep=0pt,inner xsep=1pt, inner ysep=0pt, rounded corners=2pt, minimum height=\ht\strutbox+1pt] {\lstinline[style=default,language=custom-lang]|$h_\mathit{fix}$($\Pi$,$\Delta$,$i$,$\mathit{fix}(X, \xi_1, \ldots, \xi_k))$ = $A$|};
	\end{tikzpicture}
\end{lrbox}
   
\begin{mycode}
def check($\Pi$,$\Delta$,$i$,$\mathbb{T}$,$\varphi$):
@@(*\usebox{\sati}*)(*\label{line:imc-hash1}*)
@@@@(*\usebox{\satii}*)
@@(*\usebox{\satiii}*) (*\label{line:imc-hash2}*)
@@@@(*\usebox{\satiiii}*)
@@$\mathit{res}$ = match $\varphi$ with (*\label{line:imc-match}*)
@@@@| $a_\pi$: if $a \in \Pi(\pi)(i)$ then return true else return false (*\label{line:imc-ap}*)
@@@@| $\neg \varphi'$: return (not (check($\Pi$,$\Delta$,$i$,$\mathbb{T}$,$\varphi'$))) (*\label{line:imc-neg}*)
@@@@| $\exists \pi \in X \ldot \varphi'$:(*\label{line:imc-fo}*) 
@@@@@@$A$ = (*\usebox{\opti}*) $\Delta(X)$ (*\label{line:imc-wit-hash-look}*)
@@@@@@for $t$ in $A$: (*\label{line:imc-fo-iter}*)
@@@@@@@@if check($\Pi[\pi \mapsto t]$,$\Delta$,$i$,$\mathbb{T}$,$\varphi'$) then (*\label{line:imc-fo-check}*)
@@@@@@@@@@(*\usebox{\optii}*) (*\label{line:imc-wit-hash-store}*)
@@@@@@@@@@return true
@@@@@@return false
@@@@| $\exists X \ldot \varphi'$:(*\label{line:imc-so}*) 
@@@@@@for $A \subseteq \traceSet$:
@@@@@@@@if check($\Pi$,$\Delta[X \mapsto A]$,$i$,$\mathbb{T}$,$\varphi'$) then
@@@@@@@@@@return true 
@@@@@@return false
@@@@| $\mathit{fix}(X, \xi_1, \ldots, \xi_k) \ldot \varphi'$:(*\label{line:imc-fix}*) 
@@@@@@$A'$ = (*\usebox{\fixi}*) $\emptyset$ (*\label{line:imc-fix-hash-look}*)
@@@@@@$A$ = computeFix($\Pi$,$\Delta$,$i$,$\mathbb{T}$,$\mathit{fix}(X, \xi_1, \ldots, \xi_k)$,$A'$)(*\label{line:imc-fix-comp}*)
@@@@@@(*\usebox{\fixii}*) (*\label{line:imc-fix-hash-store}*)
@@@@@@return check($\Pi$,$\Delta[X \mapsto A]$,$i$,$\mathbb{T}$,$\varphi'$)(*\label{line:imc-fix-check}*)
@@(*\usebox{\satiiiii}*) (*\label{line:imc-hash-store}*)
@@return $\mathit{res}$
\end{mycode}
\end{algorithm}

The core of our monitoring algorithm lies in our recursive model-checking function %
\checkfun{}
given in \Cref{alg:imc}. 
On a high level, %
\checkfun{} casts the semantics of \sohyperltlstar{} into an executable program.
For now, we ignore the program fragments marked with a gray background; these concern the hashing-based optimizations we will discuss in \Cref{sec:opt}.
In line \ref{line:imc-match}, we match on the structure of $\varphi$. 
We only include a selection of cases (the others can be inferred easily).
In case $\varphi$ is an AP, 
we can immediately decide its truth value (line \ref{line:imc-ap}); for the cases of negation, conjunction, and temporal operators we perform the expected recursive call(s).
For first-order quantification (line~\ref{line:imc-fo}), we iteratively check all traces assigned by $\Delta$. 
For full second-order quantification (line \ref{line:imc-so}), we check all subsets of $\traceSet$ . 
In the case of fixpoint-based second-order quantification (line \ref{line:imc-fix}), we compute the (unique) solution to the fixpoint definition (line \ref{line:imc-fix-comp}) by calling \lstinline[style=default,language=custom-lang]|computeFix| (discussed in \Cref{sec:sub:fixpoint-compo}).

\subsection{Fixpoint Computation}\label{sec:sub:fixpoint-compo}

As we argue in \Cref{sec:sohyperltlstar}, fixpoints are expressive enough for most practical properties and are much easier to handle algorithmically. 
Given a set $\mathbb{T}$ with $n$ traces, 
we can compute the fixpoint in polynomial time solution using Knaster-Tarski fixpoint iteration~\cite{tarski1955lattice} instead of trying all possible $2^n$ subsets of $\mathbb{T}$ (as needed for full second-order quantification). 
The input to \Cref{alg:fixpoint} is the fixpoint formula and the current set $A$ of traces in the (to-be) fixpoint set. 
Initially, $A = \emptyset$. 
We check if $A$ satisfies the fixpoint constraints $\xi_1, \ldots, \xi_k$: if we find traces $t_1 \in \Delta(X_1), \ldots, t_n \in \Delta(X_n)$ that satisfy the step constraint  (which we check via a mutually recursive call in line \ref{line:fix-check}), we add the trace required by the fixpoint constraint to $A$ and repeat via a recursive call (line~\ref{line:fix-rec-call}). 

\subsection{Monitoring Optimizations}\label{sec:opt}
When implemented without optimizations, \Cref{alg:imc} checks the \sohyperltlstar{} formula on the current set of traces, re-performing this check whenever a new trace arrives. 
This is often infeasible in practice, as with each new trace, the number of traces and, therefore, the verification time increases.
Instead, whenever we add a trace $t$
to the current set of traces $\mathbb{T}$, we want to reuse as much of the computation we have already performed on $\mathbb{T}$ in previous steps.
We discuss three areas where such re-usage is possible: hashing of verification results for subformulas, hashing of fixpoint results, and hashing of witnesses. 
\begin{algorithm}[!t]
    \caption{Fixpoint computation}\label{alg:fixpoint}

\begin{mycode}
def computeFix($\Pi$,$\Delta$,$i$,$\mathbb{T}$,$\mathit{fix}(X, \xi_1, \ldots, \xi_k)$,$A$):
@@$\Delta'$ = $\Delta[X \mapsto A]$ (*\label{line:fix-init}*)
@@for $\forall \pi_1 \in X_1 \ldots \forall \pi_n \in X_n\ldot \varphi_\mathit{step} \to \pi \in X $ in $\{\xi_1, \ldots, \xi_k\}$:
@@@@for $t_1 \in \Delta'(X_1), \ldots, t_n \in \Delta'(X_n)$: 
@@@@@@$\Pi'$ = $\Pi[\pi_1 \mapsto t_1, \ldots, \pi_n \mapsto t_n]$
@@@@@@if $\Pi'(\pi) \not\in A$ and check($\Pi'$,$\Delta'$,$i$,$\mathbb{T}$,$\varphi_\mathit{step}$) then  (*\label{line:fix-check}*)
@@@@@@@@return computeFix($\Pi$,$\Delta$,$i$,$\mathbb{T}$,$\mathit{fix}(X, \xi_1, \ldots, \xi_k)$,$A \cup \{\Pi'(\pi) \}$) (*\label{line:fix-rec-call}*)
@@return $A$
\end{mycode}
\end{algorithm}

\paragraph{SAT Hashing}
Our first optimization is the hashing of verification results of the \checkfun{} for \emph{subformulas}. 
If -- for a given evaluation context $(\Pi, \Delta, i)$ -- a $\mon$-monotone (resp.~$\antimon$-monotone) subformula $\varphi$ was satisfied (resp.~violated) in a previous iteration, then $\varphi$ remains satisfied (resp.~violated) in context $(\Pi, \Delta, i)$, even if more traces are added to $\traceSet$.
We therefore use a (hashing) function $h_\mathit{sat}$ (initially set to the function with empty domain) and check if we have hashed $(\Pi, \Delta, i, \varphi)$ (\Cref{alg:imc}, lines \ref{line:imc-hash1} and \ref{line:imc-hash2}).
Here, we, e.g., write \lstinline[style=default,language=custom-lang]|$h_\mathit{sat}$($\Pi$,$\Delta$,$i$,$\varphi$) = true| to denote that $(\Pi, \Delta, i, \varphi)$ is contained in the domain of $h_\mathit{sat}$ and maps to \lstinline[style=default,language=custom-lang]|true|.
If this hashing does not suffice, we evaluate $\varphi$ and update $h_\mathit{sat}$ in line \ref{line:imc-hash-store}.

\paragraph{Fixpoint Hashing}
The second optimization we use is hashing fixpoint solutions. 
In any given content $(\Pi, \Delta, i)$, the fixpoint solution only increases when new traces are added to $\mathbb{T}$.
When we compute a fixpoint solution (\Cref{alg:imc}, line \ref{line:imc-fix-comp}), we thus do not need to restart the computation from scratch. 
Instead, we hash the solution from the previous iteration using function $h_\mathit{fix}$. 
If we hashed $(\Pi, \Delta, i, \mathit{fix}(X, \xi_1, \ldots, \xi_k))$, we can this previous solution as the starting point for the current fixpoint computation. 
Otherwise, we start the computation from the empty set (line~\ref{line:imc-fix-hash-look}).
After calling \lstinline[style=default,language=custom-lang]|computeFix|, we update $h_\mathit{fix}$ and map $(\Pi, \Delta, i, \mathit{fix}(X, \xi_1, \ldots, \xi_k))$ to the newly computed solution $A$ (line \ref{line:imc-fix-hash-store}).

\paragraph{Witness Hashing}
Our last hashing-based optimization concerns the analysis of first-order quantification. 
In line~\ref{line:imc-fo-iter}, we need to iterate over all traces in the current model. 
While we cannot avoid this in the worst case, we can optimize the order in which we explore the traces. 
Concretely, whenever we find a witness trace $t$, we store it using the $h_\mathit{wit}$ hashing function (line~\ref{line:imc-wit-hash-store}). 
In the next iteration, we attempt to reorder $\Delta(X)$ such that the witness from the previous iteration will be explored first (line \ref{line:imc-wit-hash-look}). 
Intuitively, if $t$ is a witness for the satisfaction of some formula $\exists \pi. \varphi'$, then
$t$ remains a promising witness when more traces are added to $\mathbb{T}$; exploring $t$ early can thus save significant time. 

\paragraph{Prefix and Postfix Trees}
Our last optimization is applicable to all hashing techniques above.
Instead of storing the set $\mathbb{T}$ of all traces explicitly as a list, 
we store it as a tree, i.e., we store all traces based on their common prefixes (resp.~postfixes). This has two advantages:
It reduces memory overhead, particularly in cases where many traces share parts; 
and, secondly, a tree-based representation allows for even more efficient hashing. 
For example, if at time step $i$, we compute a fixpoint solution based only on future temporal operators (i.e., no past modalities), 
then all traces that, starting from step $i$, share the same postfix can be treated as equal during the fixpoint computation.
We can thus lift the computation and operate on nodes in the postfix tree rather than on concrete traces. 

\section{Experiments}\label{sec:experiments}
We have implemented our monitoring algorithm and the optimizations from \Cref{sec:opt} in a tool called \moso{}.
In this section, we demonstrate that \moso{} can monitor complex second-order hyperproperties that are out of reach of existing monitoring and model-checking frameworks.

\subsection{Running Example}
The first experiment is the scalability of \moso{} in the length of the traces.
We verify the formula in~\Cref{ex:common:knowledge:fixpoint} against traces of the running example in~\Cref{fig:sender:receiver}.
We generate traces for each instance of lengths 20, 30, 40, ..., 80 and measure the time it takes to find the violation for the starting trace $s^nr^n$ in \Cref{tab:running}.
The monitor correctly concludes in all runs that the common knowledge formula is violated.

\begin{table}[!t]
    \caption{We monitor CK in \Cref{ex:common:knowledge:fixpoint} for varying trace lengths and trace numbers (\textbf{\# traces}). 
    We report \moso{}'s average runtime in seconds ($\boldsymbol{t}$). }\label{tab:running}
    \centering
    \small
    \begin{tabular}{lccccccc}
        \toprule
        \textbf{length} & 20 & 30 & 40 & 50 & 60 & 70 & 80 \\
        \textbf{\# traces} & 35 & 55 & 75 & 95 & 115 & 135 & 155\\
        \midrule
        $\boldsymbol{t}$ & 0.51 & 1.51 & 5.98 & 18.40 & 48.83 & 111.52 & 230.71 \\
        \bottomrule
    \end{tabular}
\end{table}

\subsection{Common Knowledge in MASs}
We monitor common knowledge in a scalable instance of the muddy children's puzzle~\cite{DBLP:books/mit/FHMV1995}.
The muddy children puzzle is a MAS between children $1, \ldots, n$, where a number of rounds of communication are used to achieve common knowledge about how many children are muddy.
For each child (agent) $i$, we have an AP $m_i$ indicating if $i$ is muddy and an AP $d_i$ indicating if $i$ declared that it is muddy. 
We can then express that the muddiness of all children is CK after $b \in \nat$ steps of communication (which we call the \emph{communication-bound}) as the following \sohyperltlstar{} formula $\varphi_{\mathit{mud}}$:
\begin{align*}
    \forall \pi& \in \systemvar\ldot \LTLnext^b \mathit{fix}\big(X, \xi_1, \xi_2\big)\ldot \forall \pi_1 \in X\ldot \forall \pi_2 \in X\ldot \!\!\!\!\!\! \bigwedge_{a\in \{m_1,\ldots,m_n\}} \!\!\!\!\!\! \big( {a}_{\pi_1} \leftrightarrow {a}_{\pi_2} \big),
\end{align*}
where $\xi_1 := \mathit{true} \to \pi \in X$, and $\xi_2$ is defined as
\begin{align*}
    \forall \pi_1 \in X\ldot \forall \pi_2 \in \systemvar\ldot\big(\!\!\!\!\!\bigvee_{i\in\{1,\ldots,n\}} \LTLpastglobally ( \!\!\! \bigwedge_{\substack{a \in \{d_1, \ldots, d_n\} \cup \\ \{m_1, \ldots, m_{i-1}\}\cup \\ \{ m_{i+1}, \ldots, m_n \}}} \!\!\!\!\!\!\!\! a_{\pi_1} \leftrightarrow a_{\pi_2}) \big) \rightarrow \pi_2 \in X.
\end{align*}
Intuitively, the fixpoint captures all traces that some child $i$ cannot distinguish from $\pi$ in the first $b$ steps, i.e., we add all traces that agree on all APs except $m_i$.
Formula $\varphi_{\mathit{mud}}$ then requires that all traces in $X$ agree on the muddiness of all children.
The muddy children's MAS violates $\varphi_{\mathit{mud}}$ if and only if $b < n$.

\begin{table}[!t]
    \caption{We monitor CK in the muddy children's game. We report \moso{}'s average running time in seconds ($\boldsymbol{t}$). We also depict the runtime on the non-fixpoint-based formulation of CK ($\boldsymbol{t}_\mathit{noFix}$).
    The timeout (TO) is set to 1 min.}\label{tab:muddy-time}
    \centering
    \small
    \begin{tabular}{lccccccccc}
        \toprule
        \textbf{\# children} & 2 & 3 & 4 & 5 & 6 & 7 & 8 & 9   \\
        \midrule
        $\boldsymbol{t}$ & 0.1 & 0.2 & 0.2 & 0.2 & 0.3 & 0.4 & 0.4 & 0.4 \\
        $\boldsymbol{t}_\mathit{noFix}$ & 0.1 & 1.1 & TO & TO & TO & TO & TO & TO \\
        \bottomrule
    \end{tabular}
\end{table}

\paragraph{Scalability}
For the game with $n$ children, we sample random traces and use \moso{} to monitor CK (we set the communication-bound to be $b = \lceil \frac{n}{2} \rceil$ so CK does not hold). 
We report the runtime in \Cref{tab:muddy-time}.
We observe that even if the number of children grows, our monitor will find violations to common knowledge quickly and thus runs very effectively. 
In contrast, \emph{model-checking} of CK \cite{BeutnerFFM23} is, currently, only possible up to $n = 4$.
This supports our claim that monitoring second-order hyperproperties is a useful lightweight technique that scales in settings where full verification is infeasible.

\paragraph{Fixpoints vs Second-Order Sets}
The muddy children's puzzle also highlights the importance of fixpoint formulas compared to arbitrary second-order quantification. 
To test this empirically, we consider the same CK property but express it directly using full second-order quantification (similar to the formula in \Cref{sec:intro}). 
We give the runtime of \moso{} on the non-fixpoint-based formula in \Cref{tab:muddy-time} ($\boldsymbol{t}_\mathit{noFix}$).
When using general second-order quantification, \moso{} already times out for 4 children (due to the exponential cost of considering all subsets of traces).
This attests to the importance of fixpoints for scalable monitoring.

\paragraph{Impact of Communication-bound.}

\begin{table}[!t]
    \caption{We depict the (average) number of traces the monitor processes before concluding a violation of CK. 
    }\label{tab:muddy-steps-to-violation}
    
    \centering
    \small
    \begin{tabular}{|c|c|c|c|c|c|c|c|}
        \cline{3-8}
        \multicolumn{2}{c|}{}& \multicolumn{6}{c|}{\textbf{communication-bound}} \\
        \cline{3-8}
        \multicolumn{2}{c|}{} & \textbf{0} & \textbf{1} & \textbf{2} & \textbf{3} & \textbf{4} & \textbf{5} \\
        \hline
        \parbox[t]{2mm}{\multirow{5}{*}{\rotatebox[origin=c]{90}{\textbf{\# children}}}}& \textbf{2} & 1.3 & 7.0 & - & - & - & - \\
        \cline{2-8}
        & \textbf{3} & 2.1 & 4.5 & 15.0 & - & - & - \\
        \cline{2-8}
        & \textbf{4} & 2.1 & 4.2 & 5.3 & 31.0 & - & - \\
        \cline{2-8}
        & \textbf{5} & 2.9 & 3.9 & 5.7 & 33.1 & 63.0 & - \\
        \cline{2-8}
        & \textbf{6} & 4.3 & 4.3 & 8.1 & 12.7 & 43.3 & 127.0 \\
        \hline
    \end{tabular}
\end{table}

Finding violations to the CK formula heavily depends on the communication-bound.
With increasing communication-bound, the probability of sampling traces that violate CK decreases as more information is observed by each child.
On average, we thus need to see more traces until our monitor reports a violation. 
We test this empirically by monitoring the muddy children puzzle with a varying number of children and communication-bound and report on the average number of traces (across 10 monitor runs) the monitor observes before concluding that CK does not hold. 
We depict the results in \Cref{tab:muddy-steps-to-violation}.
A larger communication bound clearly increases the number of traces the monitor processes until a violation is reported.

\subsection{Impact of Optimizations}

Common knowledge also serves as a useful baseline to highlight the importance of our hashing-based optimizations. 
To demonstrate this, we check the following CK formula
\begin{align*}
    \exists \pi \in{} &\systemvar\ldot \LTLeventually \mathit{fix}\Big(X, \mathit{true} \to \pi \in X, \forall \pi_1 \in X\ldot \forall \pi_2 \in \systemvar\ldot\\ 
    &\!\big(\LTLpastglobally (a_{\pi_1} \leftrightarrow a_{\pi_2}) \lor \LTLpastglobally (b_{\pi_1} \leftrightarrow b_{\pi_2})\big) \rightarrow \pi_2 \in X \Big)\ldot \forall \pi' \in X\ldot c_{\pi'},
\end{align*}
which states that on \emph{some} path $\pi$, $c$ is eventually CK, given that one agent observes AP $a$ and one agent observes AP $b$.
We use \moso{} to check the above formula on randomly generated instances with a varying number of traces.
We depict the results in \Cref{fig:opt}.
We observe that all optimizations improve upon the baseline. 

\begin{figure}[!t]
    \centering
    \includegraphics[width=0.95\linewidth]{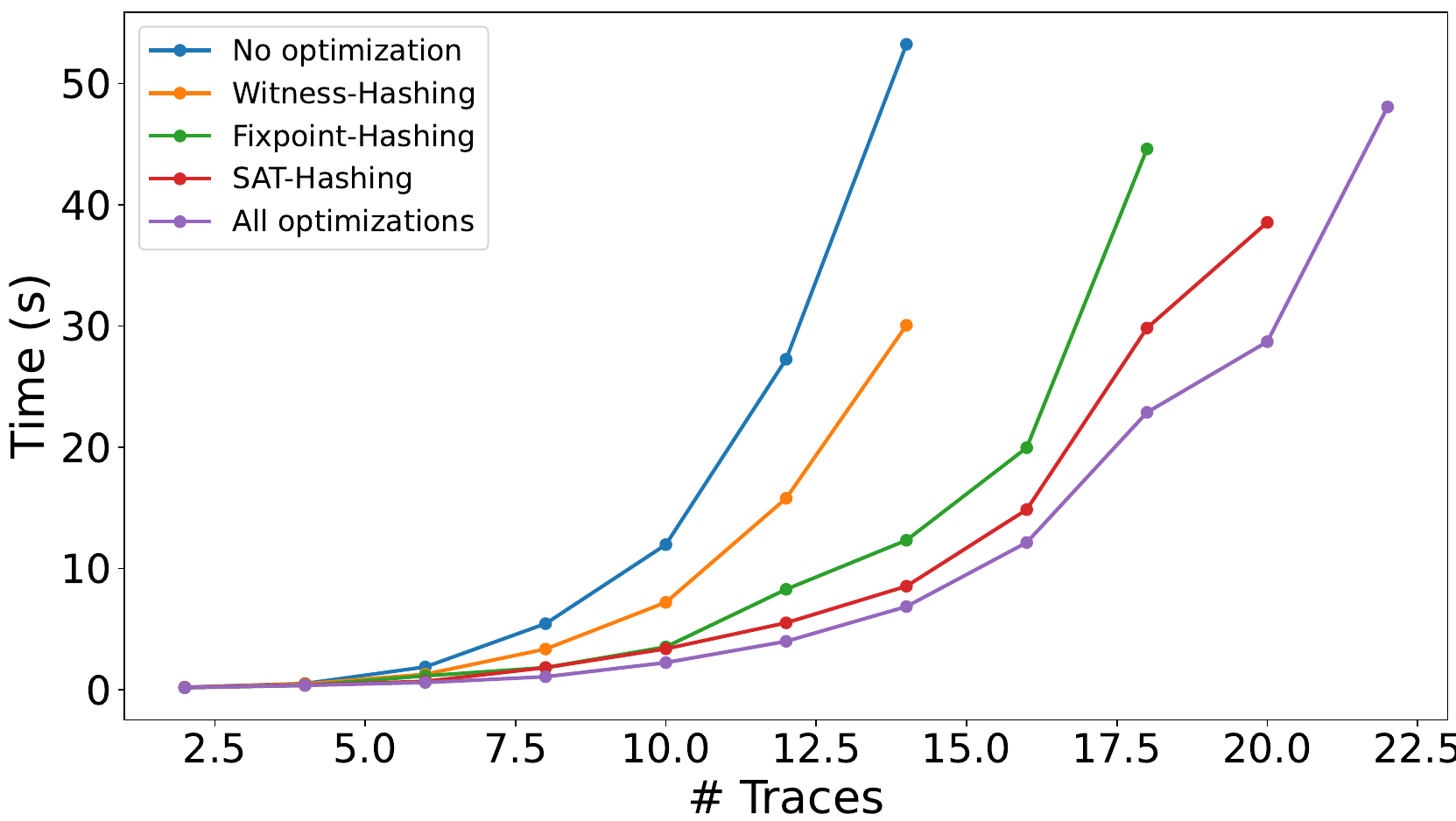}
    \vspace{-2mm}
    \caption{We compare the optimizations implemented in \moso{} as a cactus plot.
    The timeout is set to 60 sec.}\label{fig:opt}
\end{figure}

\begin{table}[!t]
    \caption{We monitor the existence of a path from source to target based on random paths in random graphs.
    For each graph size, we sample 1000 random traces and report the average number of traces that \moso{} processes to reach a verdict (\textbf{\# traces}) and the average runtime in seconds ($\boldsymbol{t}$).
    }\label{tab:planning}
    \small
    \setlength\tabcolsep{4.5pt}

    \begin{tabular}{clccccccc}
        \toprule
        & \textbf{size} & 55 & 70 & 85 & 100 & 115 & 130\\
        \midrule
        \multirow{2}{*}{\textbf{t-sen.}} & $\boldsymbol{t}$  & 1.66 & 4.08 & 5.44 & 38.1 & 35.52 & 88.19 \\
          & \textbf{\# traces}  & 1.03 & 1.76 & 3.2 & 3.91 & 4.86 & 5.04 \\
      \midrule
        \multirow{2}{*}{\textbf{t-ins.}} & $\boldsymbol{t}$ & 45.2 & 61.45 & 72.15 &  112.15 & 76.5 & 137.95 \\ 
         & \textbf{\# traces} & 52.15 & 77.35 & 98.25 & 83.95 & 125.75 & 104.55\\
     \bottomrule
    \end{tabular}
    
\end{table}

\subsection{Planning Analysis}
In our last experiment, we use our monitor to detect reachability in a graph based on the observation of random paths. 
For example, when observing traces $t_1 = \mathsf{a} \mathsf{b} \mathsf{c}$, and 
$t_2 = \mathsf{b}\mathsf{b} \mathsf{d}$, we can conclude that the trace $t = \mathsf{a} \mathsf{b} \mathsf{d}$ exists in the graphs, even without observing it. 
The inference of $t$ can easily be stated as a fixpoint constraint:
We differentiate between time-sensitive (\textbf{t-sen.}) and time-insensitive (\textbf{t-ins.}) paths. 
In the former model, we can only ``combine'' two paths if they visit the same state \emph{at the same time (step)}. 
In the latter model, we can even combine two paths if they visit the same state at \emph{potentially different times}. 
We report on the time and number of observed traces it takes to find a correct plan in \Cref{tab:planning}.
As expected, the time-insensitive case terminates earlier than the time-sensitive case since more combinations are possible, increasing the chance of concluding the existence of a path from source to target.

\section{Conclusion}
We have presented \sohyperltlstar{}, the first logic to express second-order hyperproperties for finite trace models. 
Our experiments show that our monitoring algorithm scales to larger instances well beyond the reach of complete methods such as model checking.  
Monitoring of complex and important second-order hyperproperties (particularly in the MAS domain), is thus a viable middle ground to obtain rigorous guarantees while maintaining scalability. 

\begin{acks}
This work was supported by the European Research Council (ERC) Grant HYPER (101055412), by the German Research Foundation (DFG) as part of TRR 248 (389792660), and by the German Israeli Foundation (GIF) Grant No. I-1513-407.2019.
\end{acks}

\bibliographystyle{ACM-Reference-Format} 
\balance
\bibliography{bib}

%%% -*-BibTeX-*-
%%% Do NOT edit. File created by BibTeX with style
%%% ACM-Reference-Format-Journals [18-Jan-2012].

\begin{thebibliography}{32}

%%% ====================================================================
%%% NOTE TO THE USER: you can override these defaults by providing
%%% customized versions of any of these macros before the \bibliography
%%% command.  Each of them MUST provide its own final punctuation,
%%% except for \shownote{}, \showDOI{}, and \showURL{}.  The latter two
%%% do not use final punctuation, in order to avoid confusing it with
%%% the Web address.
%%%
%%% To suppress output of a particular field, define its macro to expand
%%% to an empty string, or better, \unskip, like this:
%%%
%%% \newcommand{\showDOI}[1]{\unskip}   % LaTeX syntax
%%%
%%% \def \showDOI #1{\unskip}           % plain TeX syntax
%%%
%%% ====================================================================

\ifx \showCODEN    \undefined \def \showCODEN     #1{\unskip}     \fi
\ifx \showDOI      \undefined \def \showDOI       #1{#1}\fi
\ifx \showISBNx    \undefined \def \showISBNx     #1{\unskip}     \fi
\ifx \showISBNxiii \undefined \def \showISBNxiii  #1{\unskip}     \fi
\ifx \showISSN     \undefined \def \showISSN      #1{\unskip}     \fi
\ifx \showLCCN     \undefined \def \showLCCN      #1{\unskip}     \fi
\ifx \shownote     \undefined \def \shownote      #1{#1}          \fi
\ifx \showarticletitle \undefined \def \showarticletitle #1{#1}   \fi
\ifx \showURL      \undefined \def \showURL       {\relax}        \fi
% The following commands are used for tagged output and should be
% invisible to TeX
\providecommand\bibfield[2]{#2}
\providecommand\bibinfo[2]{#2}
\providecommand\natexlab[1]{#1}
\providecommand\showeprint[2][]{arXiv:#2}

\bibitem[\protect\citeauthoryear{Agrawal and Bonakdarpour}{Agrawal and
  Bonakdarpour}{2016}]%
        {conf/csfw/AgrawalB16}
\bibfield{author}{\bibinfo{person}{Shreya Agrawal} {and}
  \bibinfo{person}{Borzoo Bonakdarpour}.} \bibinfo{year}{2016}\natexlab{}.
\newblock \showarticletitle{Runtime Verification of k-Safety Hyperproperties in
  HyperLTL}. In \bibinfo{booktitle}{\emph{Computer Security Foundations
  Symposium, {CSF} 2016}}.
\newblock
\urldef\tempurl%
\url{https://doi.org/10.1109/CSF.2016.24}
\showDOI{\tempurl}


\bibitem[\protect\citeauthoryear{Bansal, Li, Tabajara, Vardi, and Wells}{Bansal
  et~al\mbox{.}}{2023}]%
        {DBLP:journals/corr/abs-2305-08319}
\bibfield{author}{\bibinfo{person}{Suguman Bansal}, \bibinfo{person}{Yong Li},
  \bibinfo{person}{Lucas~M. Tabajara}, \bibinfo{person}{Moshe~Y. Vardi}, {and}
  \bibinfo{person}{Andrew~M. Wells}.} \bibinfo{year}{2023}\natexlab{}.
\newblock \showarticletitle{Model Checking Strategies from Synthesis over
  Finite Traces}. In \bibinfo{booktitle}{\emph{International Symposium on
  Automated Technology for Verification and Analysis, {ATVA} 2023}}.
\newblock
\urldef\tempurl%
\url{https://doi.org/10.1007/978-3-031-45329-8\_11}
\showDOI{\tempurl}


\bibitem[\protect\citeauthoryear{Bauer, Leucker, and Schallhart}{Bauer
  et~al\mbox{.}}{2011}]%
        {10.1145/2000799.2000800}
\bibfield{author}{\bibinfo{person}{Andreas Bauer}, \bibinfo{person}{Martin
  Leucker}, {and} \bibinfo{person}{Christian Schallhart}.}
  \bibinfo{year}{2011}\natexlab{}.
\newblock \showarticletitle{Runtime Verification for {LTL} and {TLTL}}.
\newblock \bibinfo{journal}{\emph{{ACM} Trans. Softw. Eng. Methodol.}}
  (\bibinfo{year}{2011}).
\newblock
\urldef\tempurl%
\url{https://doi.org/10.1145/2000799.2000800}
\showDOI{\tempurl}


\bibitem[\protect\citeauthoryear{Beutner and Finkbeiner}{Beutner and
  Finkbeiner}{2021}]%
        {BeutnerF21}
\bibfield{author}{\bibinfo{person}{Raven Beutner} {and} \bibinfo{person}{Bernd
  Finkbeiner}.} \bibinfo{year}{2021}\natexlab{}.
\newblock \showarticletitle{A Temporal Logic for Strategic Hyperproperties}. In
  \bibinfo{booktitle}{\emph{International Conference on Concurrency Theory,
  {CONCUR} 2021}}.
\newblock
\urldef\tempurl%
\url{https://doi.org/10.4230/LIPIcs.CONCUR.2021.24}
\showDOI{\tempurl}


\bibitem[\protect\citeauthoryear{Beutner and Finkbeiner}{Beutner and
  Finkbeiner}{2024}]%
        {BeutnerF24}
\bibfield{author}{\bibinfo{person}{Raven Beutner} {and} \bibinfo{person}{Bernd
  Finkbeiner}.} \bibinfo{year}{2024}\natexlab{}.
\newblock \showarticletitle{On Alternating-Time Temporal Logic,
  Hyperproperties, and Strategy Sharing}. In
  \bibinfo{booktitle}{\emph{Conference on Artificial Intelligence, {AAAI}
  2024}}.
\newblock
\urldef\tempurl%
\url{https://doi.org/10.1609/AAAI.V38I16.29679}
\showDOI{\tempurl}


\bibitem[\protect\citeauthoryear{Beutner, Finkbeiner, Frenkel, and
  Metzger}{Beutner et~al\mbox{.}}{2023}]%
        {BeutnerFFM23}
\bibfield{author}{\bibinfo{person}{Raven Beutner}, \bibinfo{person}{Bernd
  Finkbeiner}, \bibinfo{person}{Hadar Frenkel}, {and} \bibinfo{person}{Niklas
  Metzger}.} \bibinfo{year}{2023}\natexlab{}.
\newblock \showarticletitle{Second-Order Hyperproperties}. In
  \bibinfo{booktitle}{\emph{International Conference on Computer Aided
  Verification, {CAV} 2023}}.
\newblock
\urldef\tempurl%
\url{https://doi.org/10.1007/978-3-031-37703-7\_15}
\showDOI{\tempurl}


\bibitem[\protect\citeauthoryear{Bonakdarpour and Finkbeiner}{Bonakdarpour and
  Finkbeiner}{2018}]%
        {conf/csfw/BonakdarpourF18}
\bibfield{author}{\bibinfo{person}{Borzoo Bonakdarpour} {and}
  \bibinfo{person}{Bernd Finkbeiner}.} \bibinfo{year}{2018}\natexlab{}.
\newblock \showarticletitle{The Complexity of Monitoring Hyperproperties}. In
  \bibinfo{booktitle}{\emph{Computer Security Foundations Symposium, {CSF}
  2018}}.
\newblock
\urldef\tempurl%
\url{https://doi.org/10.1109/CSF.2018.00019}
\showDOI{\tempurl}


\bibitem[\protect\citeauthoryear{Brett, Siddique, and Bonakdarpour}{Brett
  et~al\mbox{.}}{2017}]%
        {conf/tacas/BrettSB17}
\bibfield{author}{\bibinfo{person}{Noel Brett}, \bibinfo{person}{Umair
  Siddique}, {and} \bibinfo{person}{Borzoo Bonakdarpour}.}
  \bibinfo{year}{2017}\natexlab{}.
\newblock \showarticletitle{Rewriting-Based Runtime Verification for
  Alternation-Free HyperLTL}. In \bibinfo{booktitle}{\emph{International
  Conference on Tools and Algorithms for the Construction and Analysis of
  Systems, {TACAS} 2017}}.
\newblock
\urldef\tempurl%
\url{https://doi.org/10.1007/978-3-662-54580-5\_5}
\showDOI{\tempurl}


\bibitem[\protect\citeauthoryear{Camacho, Baier, Muise, and McIlraith}{Camacho
  et~al\mbox{.}}{2018}]%
        {DBLP:conf/aips/CamachoBMM18}
\bibfield{author}{\bibinfo{person}{Alberto Camacho}, \bibinfo{person}{Jorge~A.
  Baier}, \bibinfo{person}{Christian~J. Muise}, {and}
  \bibinfo{person}{Sheila~A. McIlraith}.} \bibinfo{year}{2018}\natexlab{}.
\newblock \showarticletitle{Finite {LTL} Synthesis as Planning}. In
  \bibinfo{booktitle}{\emph{International Conference on Automated Planning and
  Scheduling, {ICAPS} 2018}}.
\newblock


\bibitem[\protect\citeauthoryear{Clarkson, Finkbeiner, Koleini, Micinski, Rabe,
  and S{\'{a}}nchez}{Clarkson et~al\mbox{.}}{2014}]%
        {ClarksonFKMRS14}
\bibfield{author}{\bibinfo{person}{Michael~R. Clarkson}, \bibinfo{person}{Bernd
  Finkbeiner}, \bibinfo{person}{Masoud Koleini}, \bibinfo{person}{Kristopher~K.
  Micinski}, \bibinfo{person}{Markus~N. Rabe}, {and}
  \bibinfo{person}{C{\'{e}}sar S{\'{a}}nchez}.}
  \bibinfo{year}{2014}\natexlab{}.
\newblock \showarticletitle{Temporal Logics for Hyperproperties}. In
  \bibinfo{booktitle}{\emph{International Conference on Principles of Security
  and Trust, {POST} 2014}}.
\newblock
\urldef\tempurl%
\url{https://doi.org/10.1007/978-3-642-54792-8\_15}
\showDOI{\tempurl}


\bibitem[\protect\citeauthoryear{Fagin, Halpern, Moses, and Vardi}{Fagin
  et~al\mbox{.}}{1995}]%
        {DBLP:books/mit/FHMV1995}
\bibfield{author}{\bibinfo{person}{Ronald Fagin}, \bibinfo{person}{Joseph~Y.
  Halpern}, \bibinfo{person}{Yoram Moses}, {and} \bibinfo{person}{Moshe~Y.
  Vardi}.} \bibinfo{year}{1995}\natexlab{}.
\newblock \bibinfo{booktitle}{\emph{Reasoning About Knowledge}}.
\newblock \bibinfo{publisher}{{MIT} Press}.
\newblock
\urldef\tempurl%
\url{https://doi.org/10.7551/mitpress/5803.001.0001}
\showDOI{\tempurl}


\bibitem[\protect\citeauthoryear{Felli, Montali, Patrizi, and Winkler}{Felli
  et~al\mbox{.}}{2023}]%
        {DBLP:conf/aaai/FelliMPW23}
\bibfield{author}{\bibinfo{person}{Paolo Felli}, \bibinfo{person}{Marco
  Montali}, \bibinfo{person}{Fabio Patrizi}, {and} \bibinfo{person}{Sarah
  Winkler}.} \bibinfo{year}{2023}\natexlab{}.
\newblock \showarticletitle{Monitoring Arithmetic Temporal Properties on Finite
  Traces}. In \bibinfo{booktitle}{\emph{Conference on Artificial Intelligence,
  {AAAI} 2023}}.
\newblock
\urldef\tempurl%
\url{https://doi.org/10.1609/aaai.v37i5.25781}
\showDOI{\tempurl}


\bibitem[\protect\citeauthoryear{Finkbeiner, Hahn, Stenger, and
  Tentrup}{Finkbeiner et~al\mbox{.}}{2018}]%
        {DBLP:conf/tacas/FinkbeinerHST18}
\bibfield{author}{\bibinfo{person}{Bernd Finkbeiner},
  \bibinfo{person}{Christopher Hahn}, \bibinfo{person}{Marvin Stenger}, {and}
  \bibinfo{person}{Leander Tentrup}.} \bibinfo{year}{2018}\natexlab{}.
\newblock \showarticletitle{RVHyper: {A} Runtime Verification Tool for Temporal
  Hyperproperties}. In \bibinfo{booktitle}{\emph{International Conference on
  Tools and Algorithms for the Construction and Analysis of Systems, {TACAS}
  2018}}.
\newblock
\urldef\tempurl%
\url{https://doi.org/10.1007/978-3-319-89963-3\_11}
\showDOI{\tempurl}


\bibitem[\protect\citeauthoryear{Finkbeiner, Hahn, Stenger, and
  Tentrup}{Finkbeiner et~al\mbox{.}}{2019}]%
        {DBLP:journals/fmsd/FinkbeinerHST19}
\bibfield{author}{\bibinfo{person}{Bernd Finkbeiner},
  \bibinfo{person}{Christopher Hahn}, \bibinfo{person}{Marvin Stenger}, {and}
  \bibinfo{person}{Leander Tentrup}.} \bibinfo{year}{2019}\natexlab{}.
\newblock \showarticletitle{Monitoring hyperproperties}.
\newblock \bibinfo{journal}{\emph{Formal Methods Syst. Des.}}
  (\bibinfo{year}{2019}).
\newblock
\urldef\tempurl%
\url{https://doi.org/10.1007/s10703-019-00334-z}
\showDOI{\tempurl}


\bibitem[\protect\citeauthoryear{Finkbeiner and Sipma}{Finkbeiner and
  Sipma}{2001}]%
        {DBLP:journals/entcs/FinkbeinerS01}
\bibfield{author}{\bibinfo{person}{Bernd Finkbeiner} {and}
  \bibinfo{person}{Henny Sipma}.} \bibinfo{year}{2001}\natexlab{}.
\newblock \showarticletitle{Checking Finite Traces using Alternating Automata}.
  In \bibinfo{booktitle}{\emph{Workshop on Runtime Verification, {RV} 2001}}.
\newblock
\urldef\tempurl%
\url{https://doi.org/10.1016/S1571-0661(04)00250-6}
\showDOI{\tempurl}


\bibitem[\protect\citeauthoryear{Fionda and Greco}{Fionda and Greco}{2016}]%
        {DBLP:conf/aaai/FiondaG16}
\bibfield{author}{\bibinfo{person}{Valeria Fionda} {and}
  \bibinfo{person}{Gianluigi Greco}.} \bibinfo{year}{2016}\natexlab{}.
\newblock \showarticletitle{The Complexity of {LTL} on Finite Traces: Hard and
  Easy Fragments}. In \bibinfo{booktitle}{\emph{Conference on Artificial
  Intelligence, {AAAI} 2016}}.
\newblock
\urldef\tempurl%
\url{https://doi.org/10.1609/aaai.v30i1.10104}
\showDOI{\tempurl}


\bibitem[\protect\citeauthoryear{Gammie and van~der Meyden}{Gammie and van~der
  Meyden}{2004}]%
        {DBLP:conf/cav/GammieM04}
\bibfield{author}{\bibinfo{person}{Peter Gammie} {and} \bibinfo{person}{Ron
  van~der Meyden}.} \bibinfo{year}{2004}\natexlab{}.
\newblock \showarticletitle{{MCK:} Model Checking the Logic of Knowledge}. In
  \bibinfo{booktitle}{\emph{International Conference on Computer Aided
  Verification, {CAV} 2004}}.
\newblock
\urldef\tempurl%
\url{https://doi.org/10.1007/978-3-540-27813-9\_41}
\showDOI{\tempurl}


\bibitem[\protect\citeauthoryear{Giacomo, Felli, Montali, and Perelli}{Giacomo
  et~al\mbox{.}}{2021}]%
        {GiacomoFMP21}
\bibfield{author}{\bibinfo{person}{Giuseppe~De Giacomo}, \bibinfo{person}{Paolo
  Felli}, \bibinfo{person}{Marco Montali}, {and} \bibinfo{person}{Giuseppe
  Perelli}.} \bibinfo{year}{2021}\natexlab{}.
\newblock \showarticletitle{HyperLDLf: a Logic for Checking Properties of
  Finite Traces Process Logs}. In \bibinfo{booktitle}{\emph{International Joint
  Conference on Artificial Intelligence, {IJCAI} 2021}}.
\newblock
\urldef\tempurl%
\url{https://doi.org/10.24963/IJCAI.2021/256}
\showDOI{\tempurl}


\bibitem[\protect\citeauthoryear{Giacomo and Vardi}{Giacomo and Vardi}{2013}]%
        {DBLP:conf/ijcai/GiacomoV13}
\bibfield{author}{\bibinfo{person}{Giuseppe~De Giacomo} {and}
  \bibinfo{person}{Moshe~Y. Vardi}.} \bibinfo{year}{2013}\natexlab{}.
\newblock \showarticletitle{Linear Temporal Logic and Linear Dynamic Logic on
  Finite Traces}. In \bibinfo{booktitle}{\emph{International Joint Conference
  on Artificial Intelligence, {IJCAI} 2013}}.
\newblock


\bibitem[\protect\citeauthoryear{Giacomo and Vardi}{Giacomo and Vardi}{2015}]%
        {DBLP:conf/ijcai/GiacomoV15}
\bibfield{author}{\bibinfo{person}{Giuseppe~De Giacomo} {and}
  \bibinfo{person}{Moshe~Y. Vardi}.} \bibinfo{year}{2015}\natexlab{}.
\newblock \showarticletitle{Synthesis for {LTL} and {LDL} on Finite Traces}. In
  \bibinfo{booktitle}{\emph{International Joint Conference on Artificial
  Intelligence, {IJCAI} 2015}}.
\newblock


\bibitem[\protect\citeauthoryear{Gutierrez, Perelli, and Wooldridge}{Gutierrez
  et~al\mbox{.}}{2021}]%
        {GutierrezPW21}
\bibfield{author}{\bibinfo{person}{Julian Gutierrez}, \bibinfo{person}{Giuseppe
  Perelli}, {and} \bibinfo{person}{Michael~J. Wooldridge}.}
  \bibinfo{year}{2021}\natexlab{}.
\newblock \showarticletitle{Multi-player games with {LDL} goals over finite
  traces}.
\newblock \bibinfo{journal}{\emph{Inf. Comput.}} (\bibinfo{year}{2021}).
\newblock
\urldef\tempurl%
\url{https://doi.org/10.1016/J.IC.2020.104555}
\showDOI{\tempurl}


\bibitem[\protect\citeauthoryear{Gutsfeld, M{\"{u}}ller{-}Olm, and
  Ohrem}{Gutsfeld et~al\mbox{.}}{2020}]%
        {GutsfeldMO20}
\bibfield{author}{\bibinfo{person}{Jens~Oliver Gutsfeld},
  \bibinfo{person}{Markus M{\"{u}}ller{-}Olm}, {and} \bibinfo{person}{Christoph
  Ohrem}.} \bibinfo{year}{2020}\natexlab{}.
\newblock \showarticletitle{Propositional Dynamic Logic for Hyperproperties}.
  In \bibinfo{booktitle}{\emph{International Conference on Concurrency Theory,
  {CONCUR} 2020}}.
\newblock
\urldef\tempurl%
\url{https://doi.org/10.4230/LIPIcs.CONCUR.2020.50}
\showDOI{\tempurl}


\bibitem[\protect\citeauthoryear{Halpern and Moses}{Halpern and Moses}{1990}]%
        {HalpernM90}
\bibfield{author}{\bibinfo{person}{Joseph~Y. Halpern} {and}
  \bibinfo{person}{Yoram Moses}.} \bibinfo{year}{1990}\natexlab{}.
\newblock \showarticletitle{Knowledge and Common Knowledge in a Distributed
  Environment}.
\newblock \bibinfo{journal}{\emph{J. {ACM}}} \bibinfo{volume}{37},
  \bibinfo{number}{3} (\bibinfo{year}{1990}), \bibinfo{pages}{549--587}.
\newblock
\urldef\tempurl%
\url{https://doi.org/10.1145/79147.79161}
\showDOI{\tempurl}


\bibitem[\protect\citeauthoryear{Havelund and Rosu}{Havelund and Rosu}{2004}]%
        {DBLP:journals/sttt/HavelundR04}
\bibfield{author}{\bibinfo{person}{Klaus Havelund} {and}
  \bibinfo{person}{Grigore Rosu}.} \bibinfo{year}{2004}\natexlab{}.
\newblock \showarticletitle{Efficient monitoring of safety properties}.
\newblock \bibinfo{journal}{\emph{Int. J. Softw. Tools Technol. Transf.}}
  (\bibinfo{year}{2004}).
\newblock
\urldef\tempurl%
\url{https://doi.org/10.1007/s10009-003-0117-6}
\showDOI{\tempurl}


\bibitem[\protect\citeauthoryear{Kong and Lomuscio}{Kong and Lomuscio}{2017}]%
        {DBLP:conf/ijcai/KongL17}
\bibfield{author}{\bibinfo{person}{Jeremy Kong} {and} \bibinfo{person}{Alessio
  Lomuscio}.} \bibinfo{year}{2017}\natexlab{}.
\newblock \showarticletitle{Model Checking Multi-Agent Systems against {LDLK}
  Specifications}. In \bibinfo{booktitle}{\emph{International Joint Conference
  on Artificial Intelligence, {IJCAI} 2017}}.
\newblock
\urldef\tempurl%
\url{https://doi.org/10.24963/ijcai.2017/158}
\showDOI{\tempurl}


\bibitem[\protect\citeauthoryear{Kong and Lomuscio}{Kong and Lomuscio}{2018}]%
        {DBLP:conf/atal/KongL18}
\bibfield{author}{\bibinfo{person}{Jeremy Kong} {and} \bibinfo{person}{Alessio
  Lomuscio}.} \bibinfo{year}{2018}\natexlab{}.
\newblock \showarticletitle{Model Checking Multi-Agent Systems against {LDLK}
  Specifications on Finite Traces}. In \bibinfo{booktitle}{\emph{International
  Conference on Autonomous Agents and MultiAgent Systems, {AAMAS} 2018}}.
\newblock


\bibitem[\protect\citeauthoryear{Lomuscio, Qu, and Raimondi}{Lomuscio
  et~al\mbox{.}}{2017}]%
        {DBLP:journals/sttt/LomuscioQR17}
\bibfield{author}{\bibinfo{person}{Alessio Lomuscio}, \bibinfo{person}{Hongyang
  Qu}, {and} \bibinfo{person}{Franco Raimondi}.}
  \bibinfo{year}{2017}\natexlab{}.
\newblock \showarticletitle{{MCMAS:} an open-source model checker for the
  verification of multi-agent systems}.
\newblock \bibinfo{journal}{\emph{Int. J. Softw. Tools Technol. Transf.}}
  (\bibinfo{year}{2017}).
\newblock
\urldef\tempurl%
\url{https://doi.org/10.1007/s10009-015-0378-x}
\showDOI{\tempurl}


\bibitem[\protect\citeauthoryear{Markey}{Markey}{2003}]%
        {DBLP:journals/eatcs/Markey03}
\bibfield{author}{\bibinfo{person}{Nicolas Markey}.}
  \bibinfo{year}{2003}\natexlab{}.
\newblock \showarticletitle{Temporal logic with past is exponentially more
  succinct, Concurrency Column}.
\newblock \bibinfo{journal}{\emph{Bull. {EATCS}}} (\bibinfo{year}{2003}).
\newblock


\bibitem[\protect\citeauthoryear{Rabe}{Rabe}{2016}]%
        {DBLP:phd/dnb/Rabe16}
\bibfield{author}{\bibinfo{person}{Markus~N. Rabe}.}
  \bibinfo{year}{2016}\natexlab{}.
\newblock \emph{\bibinfo{title}{A temporal logic approach to information-flow
  control}}.
\newblock \bibinfo{thesistype}{Ph.D. Dissertation}. \bibinfo{school}{Saarland
  University}.
\newblock


\bibitem[\protect\citeauthoryear{Rajasekaran and Vardi}{Rajasekaran and
  Vardi}{2022}]%
        {RajasekaranV22}
\bibfield{author}{\bibinfo{person}{Senthil Rajasekaran} {and}
  \bibinfo{person}{Moshe~Y. Vardi}.} \bibinfo{year}{2022}\natexlab{}.
\newblock \showarticletitle{Verification and Realizability in Finite-Horizon
  Multiagent Systems}. In \bibinfo{booktitle}{\emph{International Conference on
  Principles of Knowledge Representation and Reasoning, {KR} 2022}}.
\newblock


\bibitem[\protect\citeauthoryear{Tarski}{Tarski}{1955}]%
        {tarski1955lattice}
\bibfield{author}{\bibinfo{person}{Alfred Tarski}.}
  \bibinfo{year}{1955}\natexlab{}.
\newblock \showarticletitle{A lattice-theoretical fixpoint theorem and its
  applications.}
\newblock  (\bibinfo{year}{1955}).
\newblock


\bibitem[\protect\citeauthoryear{van~der Meyden}{van~der Meyden}{1998}]%
        {Meyden98}
\bibfield{author}{\bibinfo{person}{Ron van~der Meyden}.}
  \bibinfo{year}{1998}\natexlab{}.
\newblock \showarticletitle{Common Knowledge and Update in Finite
  Environments}.
\newblock \bibinfo{journal}{\emph{Inf. Comput.}} (\bibinfo{year}{1998}).
\newblock
\urldef\tempurl%
\url{https://doi.org/10.1006/inco.1997.2679}
\showDOI{\tempurl}


\end{thebibliography}

\end{document}